\tikzstyle{state}+=[minimum size = 8mm, inner sep=0,outer sep=1]
\tikzset{->,>=stealth'}
\definecolor{wwhite}{gray}{1}
\newcolumntype{L}{>{\raggedright\arraybackslash}p{1.6cm}}
\newcolumntype{C}{>{\centering\arraybackslash}p{1.6cm}}
\newcolumntype{R}[1]{>{\raggedleft\arraybackslash}p{#1}}
\newcommand{\thmhelperpre}[2]{\newcommand{\theoremlike}[1]{\par\medskip\penalty-250\refstepcounter{theorem}{\bfseries\noindent##1 \ref{#1}.}\itshape}\theoremlike{#2}}
\newcommand{\thmhelperpost}{\par\medskip%
 \renewcommand{\theoremlike}[1]{\par\medskip\penalty-250\refstepcounter{theorem}{\bfseries\noindent##1 \thesection .\thetheorem.}\itshape}%
}
\newenvironment{reflemma}[1]{\thmhelperpre{#1}{Lemma}}{\thmhelperpost}
\newcommand{\myspace}{{}}
\newcommand{\mybigspace}{{}}
\newcommand{\prob}{\mathrm{Pr}} 
\newcommand{\pr}{\mathbb P}
\newcommand{\pmin}{p_{\mathsf{min}}}
\newcommand{\Nset}{\mathbb N}
\newcommand{\dist}{\mathsf{D}}
\newcommand{\distance}[1]{\dist_{#1}}
\newcommand{\diste}{\mathsf{E}_\mathsf{FT}} 
\newcommand{\distv}{\dist_\mathsf{TV}} 
\newcommand{\dists}{\dist_\mathsf{FT}} 
\newcommand{\disti}{\dist_\mathsf{IT}} 
\newcommand{\lang}{\mathsf{L}} 
\newcommand{\pref}[1]{\mathsf{pref}({#1})}
\newcommand{\Ap}{Ap}
\newcommand{\Mc}{\mathcal{M}}
\newcommand{\Pm}{\mathbf{P}}
\newcommand{\St}{S}
\newcommand{\init}{\mu}
\newcommand{\Lab}{L}
\newcommand{\run}{\rho}
\newcommand{\ppath}{\pi}
\newcommand{\word}{w}
\newcommand{\runs}{\mathsf{Runs}}
\newcommand{\LTL}{\ensuremath{\mathsf{LTL}}}
\newcommand{\X}{{\ensuremath{\mathbf{X}}}}
\newcommand{\F}{{\ensuremath{\mathbf{F}}}}
\newcommand{\G}{{\ensuremath{\mathbf{G}}}}
\newcommand{\cone}{\mathsf{Cone}}
\newcommand{\field}{\mathcal F}
\newcommand{\lclass}{\mathcal L}
\newcommand{\trerr}[1]{\xi}
\newcommand{\mperr}[1]{\zeta}
\newtheorem{theorem}{Theorem}
\newtheorem{lemma}{Lemma}
\theoremstyle{definition}
\newtheorem{definition}{Definition}
\newtheorem{example}{Example}
\newtheorem{remark}{Remark}
\theoremstyle{remark}
\newtheorem*{claim}{Claim}
\title
{Linear Distances between Markov Chains}
\author{Przemys\l aw Daca$^1$~~~ 
Thomas A.\ Henzinger$^1$ \\
Jan K{{\v r}}et\'insk\'y$^2$~~~ 
Tatjana Petrov$^1$
}
\affil{$^1$IST Austria, Klosterneuburg, Austria\\
$^2$Institut f\"ur Informatik, Technische Universit\"at M\"unchen, Germany
}
\date{}
\begin{document}

\maketitle

\begin{abstract}
	We introduce a general class of distances (metrics) between Markov chains, which are based on linear behaviour. 
	This class encompasses distances given topologically (such as the total variation distance or trace distance) as well as by temporal logics or automata. 
	We investigate which of the distances can be approximated by observing the systems, i.e. by black-box testing or simulation, and we provide both negative and positive results.
\end{abstract}

\section{Introduction}

Behaviour of processes is traditionally compared using various notions
of equivalence, such as trace equivalence, bisimulation, etc.
However, the concept of equivalence is often too coarse for quantitative
systems, such as Markov chains.  For instance, probabilities of
failures of particular hardware components
are typically only empirically estimated and the slightest imprecision
in the estimate may result in breaking the equivalence between
processes.  Moreover, if the (possibly black-box) processes are indeed different we would
like to measure \emph{how much they differ}.
This has led to lifting the Boolean idea of behavioural equivalence to
a finer, quantitative notion of behavioural \emph{distance} between processes.  The
distance between processes $s$ and $t$ is typically formalized as
$\sup_{p\in \mathcal C} |p(s)-p(t)|$ where $\mathcal C$ is a class of
properties of interest and $p(s)$ is a quantitative value of the
property $p$ in process $s$ \cite{DBLP:conf/concur/DesharnaisGJP99}.  This notion has been introduced
in~\cite{DBLP:conf/concur/DesharnaisGJP99} for Markov chains and
further developed in various settings, such as Markov decision
processes \cite{DBLP:conf/aaai/FernsPP04}, quantitative transition
systems~\cite{DBLP:conf/lics/AlfaroMRS07}, or concurrent
games~\cite{DBLP:conf/icalp/AlfaroFS04}.

Several kinds of distances have been investigated for Markov chains.
On the one hand, branching distances,
e.g.~\cite{DBLP:journals/entcs/Abate13,DBLP:conf/concur/DesharnaisGJP99,DBLP:journals/tcs/BreugelW06,DBLP:conf/fossacs/BreugelSW07,DBLP:conf/tacas/BacciBLM13,DBLP:conf/mfcs/BacciBLM13,DBLP:conf/qest/BacciBLM13,DBLP:journals/ejcon/GirardP11},
lift the equivalence given by the probabilistic bisimulation of Larsen
and Skou \cite{DBLP:conf/popl/LarsenS89}.  On the other hand, there
are \emph{linear} distances, in particular the total variation distance
\cite{DBLP:conf/csl/ChenK14,DBLP:conf/fossacs/BacciBLM15} and trace
distances \cite{DBLP:conf/qest/JaegerMLM14,DBLP:conf/ictac/BacciBLM15}.  Linear distances are
particularly appropriate when (i) we are interested in linear-time
properties, and (ii) we want to estimate the distance based only on
simulation runs from the initial distribution of the system, i.e.\ in a black-box  setting.
(Recall that for branching distances, the underlying probabilistic
bisimulation corresponds to testing equivalence where not only runs
from the initial distribution can be observed, but it is also possible to dump the
current state of the system, and later restart the simulation from this state~\cite{DBLP:conf/popl/LarsenS89}.)

In this paper, we introduce a simple framework for linear distances
between Markov chains, using the formula above, where $p(s)$ is the
probability of satisfying $p$ when starting a simulation run in state $s$ (when $p$ is seen
as a language of $\omega$-words it is the probability to generate a trace belonging to
$p$).
We consider several classes $\mathcal C$ of languages of interest,
characterized from several points of view, e.g.\ topologically, by linear-time logics,
or by automata, thus rendering our framework versatile.

We investigate when a given distance can be estimated in a black-box setting, i.e.\ only from simulations.
One of the main difficulties is that the class $\mathcal C$ typically
includes properties with arbitrarily long horizon or even
infinite-horizon properties, whereas every simulation run is
necessarily finite.  Note that we do not employ any simplifications
such as imposed fixed horizon or discounting, typically used for
obtaining efficient algorithms, e.g.,
\cite{DBLP:conf/concur/DesharnaisGJP99,DBLP:journals/tcs/BreugelW06,DBLP:conf/mfcs/BacciBLM13},
and the undiscounted setting is fundamentally more
complex~\cite{DBLP:conf/fossacs/BreugelSW07}.
Since even simpler tasks are impossible for unbounded horizon in the black-box setting without any further knowledge, we assume we only know a lower bound on the minimum transition probability $\pmin$.
Note that knowledge of $\pmin$ has been justified in \cite{DBLP:journals/corr/DacaHKP15}.

Our contribution is the following:
\myspace
\begin{itemize}
\item We introduce a systematic linear-distance framework and illustrate it
  with several examples, including distances previously investigated in the literature.
\item The main technical contributions are (i) a negative result stating
  that the total variation distance cannot be estimated by simulating the
  systems, and (ii) a positive result that the trace distance can be
  estimated.
\item These results are further exploited to provide both negative and positive results for each of the settings where the language class is given topologically, by LTL (linear temporal logic) fragments, and by automata.
We also show that the negative result on the total variation distance can be turned into a positive result if the transition probabilities have finite precision.
\end{itemize}
\myspace

\subsection{Related work}
There are two main linear distances considered for Markov chains:
the total variation distance and trace distance. Several algorithms have been
proposed for both of them in the case when the Markov chains are
known (white-box setting).  We are not aware of any work where the distances are estimated
only from simulating the systems (black-box setting).

Firstly, for the \emph{total variation distance} in the white-box setting,
\cite{DBLP:conf/csl/ChenK14} shows that deciding whether it equals one can
be done in polynomial time, but computing it is NP-hard and not known
to be decidable, however, it can be approximated;
\cite{DBLP:conf/fossacs/BacciBLM15} considers this distance more
generally for semi-Markov processes, provides a different
approximation algorithm, and shows it coincides with distances based
on (i) metric temporal logic, and (ii) timed automata languages.

Secondly, the \emph{trace distance} is based on the notion of trace
equivalence, which can be decided in polynomial time
\cite{DBLP:journals/ijfcs/DoyenHR08} (however, trace refinement of
Markov decision processes is already undecidable
\cite{2015arXiv151009102F}). Several variants of trace distance are
considered in \cite{DBLP:conf/qest/JaegerMLM14} where it is taken as a
limit of finite-trace distances, possibly using discounting or
averaging.  In \cite{DBLP:conf/ictac/BacciBLM15} the finite-trace
distance is shown to coincide with distances based on (i) LTL, and (ii)
LTL without the U operator, i.e., only using the X operator and Boolean
connectives. This distances is also shown to be NP-hard and not known to be decidable,
similarly to the total variation distance. Finally, an approximation
algorithm is shown (again in the white-box setting), where the over-approximates are branching-time
distances, showing an interesting connection between the branching and linear
distances.

In \cite{KieferS16} the distinguishability problem is considered, i.e.\
given two Markov chains whether there is a monitor that reads a single
sample and with high probability decides which chain produced the
sequence. This is indeed possible when the total
variation distance between the chains equals one, and~\cite{KieferS16} shows how to
construct such monitors. In contrast, our negative results shows that
it is not possible to decide with high probability whether the total
variation distance equals one when the two Markov are black-box.

Linear distances have been proposed also for quantitative transition
systems, e.g.~\cite{DBLP:conf/icalp/AlfaroFS04}.   
Moreover, there are
other useful distances based on different fundaments; for instance,
the Skorokhod
distance~\cite{DBLP:conf/emsoft/CaspiB02,DBLP:conf/hybrid/MajumdarP15,DBLP:conf/cav/DeshmukhMP15}
measures the discrete differences between systems while allowing for
timing distortion;
Kullback-Leibler divergence \cite{DBLP:conf/qest/JaegerMLM14} is useful from the information-theoretic point of view.
Finally, distances have been also studied with respect to applications in linear-time model checking \cite{DBLP:conf/hybrid/TkachevA14,DBLP:conf/ictac/BacciBLM15}.
 
\myspace 
 
\subsection{Outline}
\myspace

After recalling the basic notions in Section~\ref{sec:prelim}, we
introduce our framework and illustrate it with examples in
Section~\ref{sec:framework}. We define our problem formally in Section~\ref{sec:problem}. In Sections~\ref{sec:neg} and~\ref{sec:pos}
we provide the proofs of our technically principal negative and
positive result, respectively.
Section~\ref{sec:disc} extends the results in
the settings of topology, logics and automata, and discusses general conditions for estimability.  
We conclude in Section~\ref{sec:concl}.


Technical proofs omitted in the text can be found in Appendix.

\myspace

\section{Preliminaries}
\label{sec:prelim}
\myspace

We consider a finite set $\Ap$ of atomic propositions and denote $\Sigma=2^{\Ap}$.
\begin{definition} [Markov chain]
A \emph{(labelled) Markov chain (MC)} is a tuple $\Mc = (\St, \Pm, \init, \Lab)$, where
\begin{itemize}
\item $\St$ is a finite set of \emph{states},
\item $\Pm \;:\; \St \times \St \to [0,1]$ is a \emph{transition} probability matrix, such that for every $s\in \St$ it holds $\sum_{s'\in \St} \Pm(s,s') = 1$,
\item $\init$ is an \emph{initial} probability distribution over $\St$, 
\item $\Lab : \St \to\Sigma$ is a \emph{labelling} function.
\end{itemize}
\end{definition}
A \emph{run} of $\Mc$ is an infinite sequence $\run = s_1 s_2 \cdots$
of states, such that $\init(s_1)>0$ and
$\Pm(s_i, s_{i+1}) > 0$ for all $i\geq 1$; we let $\run[i]$ denote the
state $s_i$.  A \emph{path} in $\Mc$ is a finite prefix of a run of
$\Mc$.  An \emph{$\omega$-word} is an infinite sequence $a_1 a_2 \cdots\in\Sigma^\omega$ of
symbols from $\Sigma$; a \emph{word} is a finite prefix $w\in\Sigma^*$ of an
$\omega$-word.  
We extend the labelling notation so that for a path $\pi\in S^k$, the projected sequence $L(\pi)$ is the word $w\in \Sigma^k$, where $w[i]=L(\pi[i])$, and the inverse map is $L^{-1}(w)= \{\pi\in S^k\mid L(\pi)=w\}$.
Given a path $\ppath = s_1 \cdots s_n$, we denote
the \emph{$k$-prefix} of $\ppath$ by $\ppath\downarrow k = s_1\cdots s_k$,
and 
similarly for prefixes of words.

Each path $\ppath$ in $\Mc$ determines the set of runs $\cone(\ppath)$
consisting of all runs that start with $\ppath$.  To $\Mc$ we assign
the probability space $
(\runs,\field,\pr_\Mc)$,
where $\runs$ is the set of all runs in $\Mc$, $\field$ is the
$\sigma$-algebra generated by all $\cone(\ppath)$, 
and $\pr_{\Mc}$ is the
unique probability measure such that
$\pr_{\Mc}(\cone(s_1\cdots s_n)) = \mu(s_1)\cdot\prod_{i=1}^{n-1}
\Pm(s_{i},s_{i+1})$,
where the empty product equals $1$. 
We will omit the subscript in $\pr_\Mc$ if the Markov chain is clear
from the context.
Further, we write $\pr^s_\Mc$ for the
probability measure, where $\init(s)=1$ and $\init(s')=0$ for
$s' \neq s$. 
Finally, we overload the notation and for a path $\pi$ write $\pr(\pi)$ meaning $\pr(\cone(\pi))$, and for a ($\omega$)-word $w$, we write $\pr(w)$ meaning $\pr(L^{-1}(w))$.





\myspace

\section{Framework for Linear Distances}
\label{sec:framework}
\myspace

In this section we introduce our framework for linear distances.
For $i\in\{1,2\}$, let $\Mc_i = (\St, \Pm_i, \init_i, \Lab)$ denote a Markov chain\footnote{To avoid clutter, the chains are defined over the same state space with the same labelling, which can be w.l.o.g.\ achieved by their disjoint union.} and $(\runs,\field,\pr_i)$ the induced probability space.
Since single runs of Markov chains typically have measure $0$, we introduce linear distances using measurable sets of runs:
\begin{definition}[$\lclass$-distance]
	For a class $\lclass\subseteq\field$ of measurable $\omega$-languages\footnote{Formally, the measurable space of $\omega$-languages is given by the set $\Sigma^\omega$ equipped with a $\sigma$-algebra $\field(\Sigma)$ generated by the set of cones $\{w\Sigma^\omega\mid w\in\Sigma^*\}$. This ensures, for every measurable $\omega$-language $X$, that $L^{-1}(X)$ is measurable in every MC.}, the $\lclass$-distance $\distance{\lclass}$ is defined by
	\[ \distance{\lclass}(\Mc_1, \Mc_2) = \sup_{X\in\lclass} |\pr_1(X) - \pr_2(X)|\ .\]
\label{dfn:2}\mybigspace
\end{definition}
Note that every $\distance{\lclass}$ is a pseudo-metric\footnote{It is symmetric, it satisfies the triangle inequality, and the distance between identical MCs is $0$.}. 
However, two different MCs can have distance $0$, for instance, when they induce the same probability space.

The definition of $\lclass$-distances  can be instantiated either
(i) by a direct topological description of $\lclass$, or indirectly 
(ii) by a class $\mathcal A$ of automata inducing the class of recognized languages $\lclass=\{\lang(A)\mid A\in\mathcal A\}$, 
or 
(iii) by a set of formulae $\mathfrak L$ of a linear-time logic inducing the languages of models $\lclass=\{\lang(\varphi)\mid \varphi\in\mathfrak L\}$
where $\mathsf L(\varphi)$ denotes the language of $\omega$-words satisfying the formula $\varphi$.	
	
We now discuss several particularly interesting instantiations:
\begin{example}[Total variation]
	One extreme choice is to consider all measurable languages, resulting in the \emph{total variation distance} 
	$\distv(\Mc_1, \Mc_2) = \sup_{X\in \field(\Sigma)} |\pr_1(X) - \pr_2(X)|$.
\end{example}

\begin{example}[Trace distances]\label{ex:trac-dist}
	The other extreme choices are to consider (i) only the generators of $\field(\Sigma)$, i.e.\ the cones $\{w\Sigma^\omega\mid w\in\Sigma^*\}$, resulting in the \emph{finite-trace distance} $\dists(\Mc_1, \Mc_2) = \sup_{w \in \Sigma^+} |\pr_1(w) - \pr_2(w)|$; 
	or (ii) only the elementary events, i.e. $\Sigma^\omega$, resulting in the \emph{infinite-trace distance} $\disti(\Mc_1, \Mc_2) = \sup_{w \in \Sigma^\omega} |\pr_1(w) - \pr_2(w)|$.
\end{example}

\begin{example}[Topological distances]
	There are many possible choices for $\lclass$ between the two extremes above, such as \emph{clopen sets} $\Delta_1$, which 
	are finite unions of cones (being both closed and open), \emph{open sets} $\Sigma_1$, which are infinite unions of cones, \emph{closed sets} $\Pi_1$, or classes higher in the \emph{Borel hierarchy} such as the class of \emph{$\omega$-regular} languages (within $\Delta_3$), or languages given by thresholds for a \emph{long-run average reward} (within $\Sigma_3$).
\end{example}

\begin{example}[Automata distances]\label{ex:auto-dist}
	The class of $\omega$-regular languages can also be given in terms of automata, for instance by the class of all deterministic \emph{Rabin automata} (DRA). Similarly, the closed sets $\Pi_1$ correspond to the class of deterministic B\"uchi automata with all states final.
	Further, we can restrict the class of all DRA to those of \emph{size at most $k$} for a fixed $k\in\Nset$, denoting the resulting distance by $\dist_{DRA\leq k}$.  
\end{example}

\begin{example}[Logical distances]\label{ex:log-dist}
	The class of $\omega$-regular languages can also be given in terms of logic, by the monadic second-order logic (with order).
	Further useful choices include the \emph{first-order logic with order}, corresponding to the star-free languages and to the \emph{linear temporal logic} (LTL), or its fragments such as LTL with only $\X$ or only $\F$ and $\G$ operators etc.
\end{example}

%
%
%


\begin{remark}
	The introduced distances can also be considered in the discrete setting, resulting in various notions of equivalence. 
	For instance, the \emph{finite-trace equivalence} $\diste$ can be derived from the finite-trace distance by the following discretization:\myspace
	\[ \diste(\Mc_1, \Mc_2) =
	\begin{cases}
	0 & \text{ if } \dists(\Mc_1, \Mc_2)=0 \\
	1 & \text{ otherwise, i.e., } \dists(\Mc_1, \Mc_2)>0. \\
	\end{cases}
	\]
\end{remark}
\mybigspace

\section{Problem Statement}\label{sec:problem}
\myspace

Linear distances can be very useful when we want to compare a black-box system with another system, e.g. a white-box specification or a black-box previous version of the system.
Indeed, in such a setting we can typically obtain simulation runs of the system and we must establish a relation between the systems based on these runs only.
This is in contrast  with branching distances where either both systems are assumed white-box or there are strong requirements on the testing abilities, such as dumping the current state of the system, arbitrary many restarts from there, and nesting this branching arbitrarily.
Therefore, we focus on the setting where we can obtain only finite prefixes of runs and we use statistics to (i) deduce information on the whole infinite runs, and (ii) estimate the distance of the systems.

\medskip
\noindent\framebox[\linewidth]{\parbox{0.95\linewidth}{
    For a given distance function $\distance{\lclass}$, the goal is to construct an almost-surely terminating algorithm that given
\myspace

\begin{itemize}
\item any desired finite number of sampled simulation run from Markov chains $\Mc_1$ and $\Mc_2$ of any desired finite length,
\item lower bound $\pmin>0$ on the minimum (non-zero) transition probability,
\item confidence $\alpha\in (0,1)$,
\item interval width $\delta\in (0,1)$, 
\end{itemize}
computes an interval $I$ such that $|I|\leq\delta$ and
$\prob[ \distance{\lclass}(\Mc_1, \Mc_2) \in I] \geq 1 - \alpha$.
}}

\medskip
A distance function is called \emph{estimable}, if there exists an algorithm in the above sense, and \emph{inestimable} otherwise.

\myspace
\section{Inestimability: Total variation distance}
\label{sec:neg}
\myspace We show that for the total variation distance $\distv$ there
exists no ``statistical'' algorithm (in the above sense) which is
correct for all inputs $(\Mc_1,\Mc_2, \alpha, \delta)$.
Our argument consists of two steps:
\begin{enumerate}
\item We construct two chains such that $\distv(\Mc_1,\Mc_2)=1$, namely the two MCs shown in Figure~\ref{fig:mc-cex} (similar to \cite{DBLP:conf/qest/JaegerMLM14}): one with $\tau=0$ and the other with small $\tau>0$.
\item We show that any potentially correct algorithm will give with high probability an incorrect output for some choice of $\tau,\alpha,\delta$.
\end{enumerate}

\begin{figure}[h]
\centering\mybigspace
\scalebox{1}{
\begin{tikzpicture}[node distance=2cm]
\node[state,initial text=] (s) at (0,0){$a$};
\node[state] (t) [right of=s] {$b$};
\path[->]
(s) edge [bend right] node[below]{$0.5+\tau$} (t)
(t) edge [bend right] node[above]{$0.5-\tau$} (s)
(s) edge [loop left] node[left]{$0.5-\tau$} (s)
(t) edge [loop right] node[right]{$0.5+\tau$} (t);
\end{tikzpicture}
}\myspace\myspace
\caption{A Markov chain with labelling displayed in states.}
\label{fig:mc-cex}
\vspace*{-0.5em}
\end{figure}

\noindent \textbf{Maximizing event} We start by showing that even
an arbitrarily small difference in transition probabilities between two
Markov chains may result in total variation distance of $1$. Consider
two Markov chains as in Figure \ref{fig:mc-cex}, where $\Mc_1$ has
$\tau=0$, and $\Mc_2$ has $\tau>0$. We assume that the initial
distribution for each chain is its stationary distribution. In this setting,
every simulation step is like an independent trial with probability
$0.5-\tau$ (resp.\ $0.5+\tau$) of seeing $a$ (resp.\ $b$).

Let $X_n$ (resp. $Y_n$) denote the number of $b$ symbols in a random
path of length $n$ sampled from $\Mc_1$ (resp. $\Mc_2$). 
By the central limit theorem the distributions of $X_n$ and $Y_n$
are converging to the normal distribution when $n\rightarrow \infty$:
\[ X_n \approx \mathcal{N}(0.5n, 0.5^2n)\qquad\qquad Y_n \approx \mathcal{N}((0.5+\tau)n,n(0.25-\tau^2)).\]

For $n\in \Nset$ let the event $E_n$ mean
``there is at most $c_n = 
(0.5+\tau/2)n$
 symbols $b$ in the path
prefix of length $n$.'' 
%
The probabilities of event $E_n$ in the two Markov chains are:
\[ \pr_{\Mc_1}(E_n) = \pr_{\Mc_1}(X_n \leq c_n) = \Phi(\tau\sqrt{n})\qquad\pr_{\Mc_1}(E_n) = \pr_{\Mc_1}(Y_n \leq c_n) = \Phi(\frac{-0.5\tau\sqrt{n}}{\sqrt{0.25-\tau^2}}),\]
where $\Phi$ is the CDF of the standard normal distribution. For
$n\rightarrow \infty$ the probability of $E_n$ in $\Mc_1$ and $\Mc_2$ converges
to $1$ and $0$, respectively, so the total variation
distance converges to $1$.

\medskip
\noindent
\textbf{Negative result for total variation distance}
Now we show that there is no statistical procedure for estimating
total variation distance that would almost-surely terminate.

\begin{theorem}\label{thm:tv}
  For any $\delta<1$ and $\alpha< \frac{1}{2}$, there is no algorithm for
  computing a $1-\alpha$ confidence interval of size $\delta$ for the total
  variation distance that almost-surely terminates.
\end{theorem}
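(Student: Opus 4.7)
The plan is by contradiction: assume such an algorithm $A$ exists and confront it with two input pairs that force incompatible outputs while being statistically indistinguishable within the sample budget forced by almost-sure termination. Fix $\pmin = 1/4$ and restrict attention to $\tau \in (0, 1/4)$. Consider the inputs $P_0 = (\Mc_1, \Mc_1)$ (both copies having $\tau = 0$) and $P_\tau = (\Mc_1, \Mc_2)$ with $\tau > 0$ to be chosen later. On $P_0$ the total variation distance is $0$; on $P_\tau$ the maximizing-event argument above gives $\distv(\Mc_1,\Mc_2) = 1$. By correctness, the interval $I$ returned by $A$ must contain $0$ with probability at least $1-\alpha$ on $P_0$ and must contain $1$ with probability at least $1-\alpha$ on $P_\tau$, and since $|I|\leq\delta<1$ these two containment events are mutually exclusive on any individual run.

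Next I would exploit almost-sure termination on $P_0$. Let $N_A$ denote the total number of symbols $A$ reads during an execution. Because $A$ terminates almost surely on $P_0$, for every $\epsilon > 0$ there is a finite $N = N_\epsilon$ with $\prob_{P_0}[N_A \leq N] \geq 1 - \epsilon$. I would then couple the two executions of $A$ on $P_0$ and $P_\tau$ by using the same random tape for $A$ and a maximal coupling of the chain-$2$ sample streams: started from its stationary distribution, $\Mc_2$ emits independent Bernoulli$(1/2+\tau)$ labels and $\Mc_1$ emits independent Bernoulli$(1/2)$ labels, so each corresponding pair can be coupled to disagree with probability exactly $\tau$. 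Under this coupling, a union bound gives that the event ``the first $N$ chain-$2$ samples agree'' has probability at least $1 - N\tau$, and on this event the two executions of $A$ produce identical transcripts and hence identical outputs.

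Combining, $A$'s output on $P_0$ and $P_\tau$ agrees on an event of probability at least $1 - \epsilon - N\tau$, so the two output distributions differ in total variation by at most $\epsilon + N\tau$. However, by the previous paragraph these two distributions place mass at least $1 - \alpha$ on disjoint subsets of intervals, so their total variation distance is at least $1 - 2\alpha$. Fixing $\epsilon < (1 - 2\alpha)/2$ determines $N$, and then choosing $\tau$ small enough that $N\tau < (1 - 2\alpha)/2 - \epsilon$ yields $1 - 2\alpha \leq \epsilon + N\tau < 1 - 2\alpha$, a contradiction.

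The main obstacle is the adaptive and only almost-surely bounded nature of $A$'s sample complexity: one cannot obtain a uniform a priori bound on $N_A$, only a high-probability one. The coupling above absorbs this into the additive $\epsilon$ term, at the cost of the order-of-quantification subtlety that $\epsilon$ (hence $N$) must be fixed \emph{before} $\tau$ is chosen. This ordering is inherent to the problem: asymptotically $\Theta(1/\tau^2)$ samples are needed to distinguish the two Bernoulli streams, so no finite budget can suffice uniformly in $\tau$, which is precisely why the positive result for the trace distance (where prefix probabilities remain bounded away from $0$ given $\pmin$) cannot be adapted to total variation.
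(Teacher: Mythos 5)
Your proof is correct, but it takes a genuinely different technical route to the same contradiction. The paper first reduces the confidence-interval problem to a one-sample decision problem $\mathsf{B}_\alpha$ (distinguish $\tau=0$ from $\tau>0$ with confidence $1-\alpha$) and then refutes $\mathsf{B}_\alpha$ by a counting/change-of-measure argument: modelling the algorithm as a deterministic Turing machine, almost-sure termination on $\Mc(0)$ yields a finite quantile $q$ with $\pr_1(X=\text{\textbf{No}}\wedge Y\leq q)\geq 0.5$; this event is a union of $m\geq 2^{q-1}$ cones over length-$q$ paths, and the perturbation $\epsilon=0.5-\alpha^{1/q}2^{(1-q)/q}$ is chosen so that each cone has probability at least $(0.5-\epsilon)^q$ under $\Mc(\epsilon)$, forcing the \textbf{No}-probability there to be at least $2^{q-1}(0.5-\epsilon)^q=\alpha$, contradicting correctness. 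You instead argue directly about the interval-producing algorithm and replace the cone counting by a maximal coupling of the i.i.d.\ label streams (valid because of the stationary initial distribution, which the paper also assumes): the horizon $N$ obtained from almost-sure termination on the $\tau=0$ input is fixed \emph{before} $\tau$, the output distributions on the two inputs are then within $\epsilon+N\tau$ in total variation, while correctness with $\delta<1$ forces a separation of $1-2\alpha$. The quantifier discipline is the same in both proofs ($q$, respectively $N$, first; the perturbation second) and both hinge on $\alpha<\frac{1}{2}$; what your coupling buys is a shorter, more standard argument that accommodates randomized algorithms and multiple adaptively sampled runs without the paper's explicit single-path reduction and explicit formula for $\epsilon$, whereas the paper's counting yields a concrete perturbation size as a function of $q$ and $\alpha$. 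Your appeal to the maximizing-event discussion for $\distv(\Mc(0),\Mc(\tau))=1$ mirrors the paper's own use of that preliminary argument, so no gap arises there; only the adaptive-reading step of the coupling (agreement of all symbols actually read, by induction along the transcript) deserves a sentence more care than you give it, but it is routine given the i.i.d.\ structure.
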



\begin{proof}
  Let us write $\Mc(\tau)$ for a Markov chain in Figure
  \ref{fig:mc-cex} with the parameter $\tau$ and 
  the initial distribution being stationary.

  For $\alpha < \frac{1}{2}$ we define the following decision problem
  $\mathsf{B}_\alpha$:
  \begin{itemize}
  \item The input to $\mathsf{B}_\alpha$ is a single path from $\Mc(\tau)$ of
  arbitrary length, where $\tau$ is unknown,
\item The task of $\mathsf{B}_\alpha$ is to output answer \textbf{Yes}
  with probability $\geq 1-\alpha$ if
  $\distv(\Mc(0), \Mc(\tau)))=1$, output answer \textbf{No}
  with probability $\geq 1-\alpha$ if
  $\distv(\Mc(0), \Mc(\tau))=0$. Note that $\distv(\Mc(0), \Mc(\tau))$ can equal only $0$ or $1$.
  \end{itemize}

  The remaining part of proof is done in two parts. In the first part, we show that there is no
  algorithm that solves $\mathsf{B}_\alpha$ and almost-surely terminates. In the second part we
  reduce the problem $\mathsf{B}_\alpha$ to computing a confidence
  interval for the total variation distance.

\medskip

  \noindent \textbf{Part I.}
  Suppose the opposite of the claim: that for some $\alpha<\frac{1}{2}$ there
  is an algorithm which solves $\mathsf{B}_\alpha$ and almost-surely
  terminates.  We represent the algorithm for solving
  $\mathsf{B}_\alpha$ as a deterministic Turing machine $\mathsf{TM}$,
  which works as follows:
  \begin{enumerate}
  \item The input tape of $\mathsf{TM}$ contains a (single) randomly sampled run of $\Mc(\tau)$,
  \item $\mathsf{TM}$ reads a part of the run from the tape and eventually returns \textbf{Yes}/\textbf{No} answer.
  \end{enumerate}

  The input to the $\mathsf{TM}$ is random, therefore we can assign a
  probability distribution to the computations of $\mathsf{TM}$.  To this
  end, we represent the answer of $\mathsf{TM}$ by the random variable
  $X : \runs \mapsto \{\text{\textbf{Yes}, \textbf{No}}\}$, and we use
  the random variable $Y : \runs \mapsto \Nset\cup \{\infty\}$ to
  represent the number of path symbols $\mathsf{TM}$ reads before
  terminating, where $\infty$ means that $\mathsf{TM}$ does not
  terminate.

  Suppose we run $\mathsf{TM}$ on the Markov chain $\Mc(0)$. We
  write $\pr_1$ for the probability measure of $\mathsf{TM}$ on this
  input.  The total variation distance between the two Markov chains
  $\Mc(0)$ is $0$, so with probability $\geq 1-\alpha$ $\mathsf{TM}$
  returns answer \textbf{No}, i.e.\ $\pr_\mathsf1(X=\text{\textbf{No}}) \geq 1-\alpha.$

  By assumption $\mathsf{TM}$ almost-surely terminates on every input,
  so $\pr_1(Y \in \Nset) = 1$.  Let $q$ be the following quantile:
  \[ q = \min\{ c\in \Nset~:~\pr_1(Y \leq c)\geq 0.5+\alpha \}. \]
  \begin{claim}
  	$q\in \Nset$.
  \end{claim}
 It follows that:
  \begin{equation}\label{eq:n-1}
     \pr_1(X=\text{\textbf{No}} \land Y \leq q)= 1 - \pr_1(X=\text{\textbf{Yes}} \vee Y > q)  
     \geq 1 - \pr_1(X=\text{\textbf{Yes}}) - \pr_1(Y > q) \geq 0.5.
  \end{equation}


  Turing machine $\mathsf{TM}$ is deterministic, so if it terminates after reading 
  prefix $\ppath$ of some run $\run$, then it terminates after reading prefix $\ppath$ of any run.
  As a consequence, the event $Y\leq q$ can be represented as a union of $\ell$ cones where $\ell \leq |\Sigma|^q = 2^q$ since $\Sigma=\{a,b\}$ in $\Mc$:
  \[\{\run ~:~ Y(\run) \leq q\} = \bigcup_{i=1}^\ell \cone(\ppath_i),\]
  where all $\ppath_i\in \Sigma^q$ are distinct.  The event
  $X=\text{\textbf{No}} \land Y \leq q$ is a refinement of the event
  $Y\leq q$, so it may also be represented as
  \begin{equation}
\label{eq:n0}
\{\run ~:~ X=\text{\textbf{No}} \land Y(\run) \leq q\} = \bigcup_{i=1}^m \cone(\ppath_i),
  \end{equation}
  where $m\leq \ell\leq 2^q$.
  Since every path in $\Mc(0)$ of length $q$ has probability $0.5^q$, we get by \eqref{eq:n0}
  \[ \pr_1(X=\text{\textbf{No}} \land Y(\run) \leq q) =
  \pr_1(\bigcup_{i=1}^m \cone(\ppath_i)) = \sum_{i=1}^m \pr_1(\ppath_i)
  = m0.5^q.\] Then by \eqref{eq:n-1} it follows that $m \geq 2^{q-1}$.

  Now, we run $\mathsf{TM}$ on the Markov chain
  $\Mc(\epsilon)$ where 
  \( \epsilon =
0.5 - \alpha^\frac{1}{q}2^\frac{1-q}{q} 
   \)
   if $q>0$ and $\epsilon=0.25$ in the degenerated case of $q=0$.
   \begin{claim}
   $\epsilon>0$.
   \end{claim}

  Let us write $\pr_2$ for the probability measure of $\mathsf{TM}$ on
  the input $\Mc(\epsilon)$.
The distance between $\Mc(0)$ and $\Mc(\epsilon)$ is
  $1$, since $\epsilon>0$. As a consequence, $\mathsf{TM}$ should return answer \textbf{Yes} on this
  input with probability $\geq 1-\alpha$, or equivalently answer
  \textbf{No} with probability $< \alpha$. We show, however, that the probability of \textbf{No} is $\geq \alpha$:
  \begin{align*}
    &\pr_2(X=\text{\textbf{No}} \land Y \leq q) = \sum_{i=1}^m \pr_2(\ppath_i) &\text{by \eqref{eq:n0}} \\
    &  = \sum_{i=1}^m (0.5+\epsilon)^{u_i}(0.5-\epsilon)^{q-u_i}  & u_i \text{ is number of $b$'s in }\ppath_i\\
    &\geq \sum_{i=1}^m (0.5-\epsilon)^q = m(0.5-\epsilon)^q\\
    &\geq 2^{q-1}(0.5-\epsilon)^q = \alpha. &\text{by }m \geq 2^{q-1}..
  \end{align*}
  We obtain a contradiction, thus the assumed machine $\mathsf{TM}$ does not exist.

\medskip

\noindent  \textbf{Part II.}  Suppose for a contradiction that for some
  $\alpha< \frac{1}{2},\delta<1$ there exists an algorithm
  $\mathsf{Alg}_{\alpha,\delta}$ that solves the problem defined in the
  theorem and almost-surely terminates. Then then this algorithm can
  solve the problem $\mathsf{B}_\alpha$ in the following way:
  \begin{enumerate}
  \item Use $\mathsf{Alg}_{\alpha,\delta}$ to compute a confidence interval
    $I$ for the total variation distance between $\Mc(0)$ and
    $\Mc(\tau)$.  Algorithm $\mathsf{Alg}_{\alpha,\delta}$ can sample any
    number of paths from $\Mc(0)$.  Observe that in $\Mc(\tau)$
    probability of seeing states $a$ and $b$ remains constant over
    time.  Thus, sampling multiple paths from $\Mc(\tau)$ by
    $\mathsf{Alg}_{\alpha,\delta}$ can be replaced by sampling a single path
    from $\Mc(\tau)$.
  \item Output \textbf{Yes} if $1\in I$, \textbf{No} if $0\in I$.
  \end{enumerate}
  We have shown that for any $\alpha<\frac{1}{2}$ the problem
  $\mathsf{B}_\alpha$ cannot by solved by an algorithm that
  almost-surely terminates.  As a consequence, the algorithm
  $\mathsf{Alg}_{\alpha,\delta}$ cannot exist.
\end{proof}

From Part II, it follows that there is no statistical algorithm even for fixed $\alpha$ and $\delta$.

\myspace

\section{Estimability: Finite-trace distance}
\label{sec:pos}
\myspace
In Section~\ref{ssec:fixed} we show how to estimate the distance given by traces of a fixed length. In Section~\ref{ssec:unbounded} we show how to reduce the problem of computing the finite-trace distance $\dists$ (where traces of arbitrary lengths are considered) to computing a constant number of fixed-length distances. 
\myspace

\subsection{Estimates for fixed length}\label{ssec:fixed}
\myspace

Given two Markov chains $\Mc_1$ and $\Mc_2$ we wish to estimate the finite-trace distance for fixed length $k\in \Nset$
\[ \dists^k = \sup_{w \in {\Sigma}^k} |\pr_1(w) - \pr_2(w)|.\]
There is $m=|\Sigma|^k$ words in $\Sigma^k$ (we enumerate them as $w_1,\cdots,w_m$), so the traces of 
length $k$ follow a multinomial distribution, i.e.\ for $i=1,2$
$\sum_{j=1}^{m}, \pr_i(w_j) = 1$.

We present a statistical procedure that estimates $\dists^k$ with arbitrary precision.
For $j\leq |\Sigma|^k$ we call a \emph{contrast} $\Delta_j$ the difference in probabilities
of trace $w_j$ between $\Mc_1$ and $\Mc_2$:
$\Delta_j = |\pr_1(w_j) - \pr_2(w_j)|$.  The distance $\dists^k$ is
the maximum over all such contrasts
$\dists^k = \max_{j\leq m} \Delta_j$.
We use the statistical procedure of \cite{goodman1964} to
simultaneously estimate all contrasts.  We sample random paths from
both Markov chains, and let $n_i^j$ denote the number of observations
of trace $w_j$ in a Markov chain $\Mc_i$. We write
$n_i = \sum_{j\leq m} n_i^j$ for the sum of all observations in $\Mc_i$.  The
estimator of $\pr_i(w_j)$ is $\tilde{p}_i^j = \frac{n_i^j}{n_i}$, and
the estimator of $\Delta_j$ is
$\tilde{\Delta}_j = |\tilde{p}_1^j - \tilde{p}_2^j|$.

\begin{theorem}[\cite{goodman1964}]
\label{thm:multi}
As $n_1, n_2\rightarrow \infty$ the probability approaches $1-\alpha$ that simultaneously for all contrasts 
\mybigspace
\[ |\Delta_j - \tilde{\Delta}_j| \leq S_j M\qquad\text{where}\qquad S_j = \sqrt{\frac{\tilde{p}_1^j - (\tilde{p}_1^j)^2}{n_1} + \frac{\tilde{p}_2^j - (\tilde{p}_2^j)^2}{n_2}},\]
\noindent
and $M$ is the square root of the $\frac{1-\alpha}{100}$ percentile of the $\chi^2$ distribution with $|\Sigma|^k$ degrees of freedom. 
\end{theorem}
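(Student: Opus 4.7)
The plan is to combine the multivariate central limit theorem for independent multinomial samples with a Scheff\'e-type projection inequality, so that simultaneous coverage across all $m=|\Sigma|^k$ contrasts follows from a single chi-squared tail bound.

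First I would establish asymptotic joint normality. For each $i\in\{1,2\}$ the count vector $(n_i^1,\dots,n_i^m)$ is multinomial, so by the multivariate CLT $\sqrt{n_i}\,(\tilde p_i - p_i) \Rightarrow \mathcal N(0,\Sigma_i)$ with the standard multinomial covariance $(\Sigma_i)_{jj}=p_i^j(1-p_i^j)$ and $(\Sigma_i)_{j\ell}=-p_i^jp_i^\ell$. Because the samples from $\Mc_1$ and $\Mc_2$ are drawn independently, the joint vector has a block-diagonal Gaussian limit, and consequently the signed centred contrast $D_j := (\tilde p_1^j-\tilde p_2^j) - (p_1^j-p_2^j)$ is asymptotically $\mathcal N(0,\sigma_j^2)$ with $\sigma_j^2 = p_1^j(1-p_1^j)/n_1 + p_2^j(1-p_2^j)/n_2$. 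Consistency $S_j^2 \to \sigma_j^2$ in probability follows from the continuous mapping theorem applied to $\tilde p_i^j \to p_i^j$.

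Second, I would invoke the Scheff\'e-type projection: for a centred Gaussian $Z\in\mathbb R^m$ with covariance $\Sigma$ of rank $d$, the supremum $\sup_{c\neq 0}(c^\top Z)^2/(c^\top\Sigma c)$ is $\chi^2_d$-distributed. Restricting $c$ to the standard basis yields the coordinate-wise domination
\[ \max_j \frac{Z_j^2}{e_j^\top \Sigma e_j} \;\le\; \sup_{c\neq 0} \frac{(c^\top Z)^2}{c^\top \Sigma c}, \]
so the simultaneous event $|Z_j|\le M\sqrt{e_j^\top\Sigma e_j}$ for all $j$ holds with probability $\ge 1-\alpha$ once $M^2$ is taken as the $(1-\alpha)$-quantile of $\chi^2_d$. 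Applying this to the limiting Gaussian of $(D_j)_j$ and then using Slutsky's theorem to swap each $\sigma_j$ for the consistent estimator $S_j$ delivers the stated bound on $|\Delta_j - \tilde\Delta_j|$; the absolute values defining $\Delta_j$ and $\tilde\Delta_j$ are absorbed via the reverse triangle inequality $\big||\tilde p_1^j-\tilde p_2^j|-|p_1^j-p_2^j|\big| \le |D_j|$.

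The main obstacle is pinning down the chi-squared degrees of freedom. The rank of the joint multinomial covariance is $2(m-1)$ rather than $m$, so the naive Scheff\'e bound would give $\chi^2_{2(m-1)}$; the theorem quotes $d = |\Sigma|^k = m$, which is either a (slight) conservative weakening still yielding coverage $\ge 1-\alpha$, or relies on Goodman's sharper argument restricting to the affine subspace $\{c:\sum_j c_j = 0\}$ on which the relevant contrasts live. Verifying the precise dimension — and checking that the asymptotic statement is uniform enough in the unknown $p_i^j$ — is the technical crux; the remaining manipulations are routine applications of the continuous mapping and Slutsky theorems.
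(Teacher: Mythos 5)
This theorem is not proved in the paper at all: it is imported verbatim from Goodman (1964), so there is no in-paper argument to compare yours against, and I can only assess your reconstruction on its own terms. Its skeleton --- multivariate CLT for the two independent multinomial count vectors, a Scheff\'e-type projection to obtain a simultaneous $\chi^2$ bound, Slutsky's theorem to trade $\sigma_j$ for $S_j$, and the reverse triangle inequality to pass from the signed differences $D_j$ to $|\Delta_j-\tilde{\Delta}_j|$ --- is sound and is essentially the classical route behind Goodman-type simultaneous intervals.

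The genuine gap is exactly the point you flag as the crux, and your proposed resolution is wrong in direction. The object to project is the $m$-dimensional limiting Gaussian of the difference vector $(D_j)_j$, whose covariance $V_1/n_1+V_2/n_2$ (with $V_i$ the multinomial covariance of population $i$) kills the all-ones vector, since each $V_i$ has row sums zero; hence its rank is at most $m-1$, your own projection argument already yields a $\chi^2_{m-1}$ bound, and quoting the $\chi^2_{m}$ quantile, as the theorem does, is then a conservative choice giving coverage at least $1-\alpha$. Your alternative reading --- that the ``naive'' bound is $\chi^2_{2(m-1)}$ from the joint $2m$-dimensional covariance and that $d=m$ is a ``slight conservative weakening'' of it --- cannot stand: for $m\geq 2$ the $\chi^2_{m}$ quantile is \emph{smaller} than the $\chi^2_{2(m-1)}$ quantile, so if $2(m-1)$ were the right degrees of freedom the stated intervals would undercover rather than overcover. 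The joint $2m$-dimensional covariance is simply not the relevant object, because the family of contrasts in the statement consists only of the coordinatewise differences. Two minor further points: with the projection restricted to coordinate directions you obtain asymptotic coverage at least $1-\alpha$, not exactly $1-\alpha$ as the (loosely phrased) statement suggests --- the inequality is all that is used later, and is how Goodman states it --- and the uniformity in the unknown cell probabilities that worries you is not needed, since the claim is asymptotic for fixed chains (only degenerate cells with $p_i^j\in\{0,1\}$ need a separate trivial remark).
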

The procedure for estimating $\dists^k$ works as follows.  For
$\epsilon,\alpha >0$ we sample paths from $\Mc_1$ and $\Mc_2$ until,
by Theorem~\ref {thm:multi}, with probability $1-\alpha$ for all
contrasts $|\Delta_j - \tilde{\Delta}_j| \leq \epsilon$.
Then with probability $1-\alpha$ it holds that
\( |\dists^k - \max_{j\leq m} \tilde{\Delta}_j| \leq \epsilon. \)


\myspace

\subsection{Estimates for unbounded length}\label{ssec:unbounded}
\myspace

Intuitively, the longer the path, the less probable it is, and the less distance it can cause.
However, this is only true if along the path probabilistic choices are made repeatedly. 

\begin{definition}
In a Markov chain $\Mc$, a 
state $s\in S$ is $k$\emph{-deterministic},
if there exists a word $w$ of length $k$, such that $ \pr^s(w) = 1$.
Otherwise, $s$ is $k$\emph{-branching}.  
A state $s\in S$ is 
\emph{deterministic}, if it is $k$-deterministic for all $k\in{\mathbb N}$.
\end{definition}
\myspace

\begin{lemma}
	\label{lem:det-branch}
If $s\in S$ is $k$-branching, it is also $(k+1)$-branching. 
Dually, if it is $k$-deterministic, it is also $(k-1)$-deterministic.
\end{lemma}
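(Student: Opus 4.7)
The two assertions are contrapositives of one another under the substitution $k \mapsto k+1$, so it suffices to prove the second one: if $s$ is $k$-deterministic, then $s$ is $(k-1)$-deterministic. In other words, given a word $w \in \Sigma^k$ with $\pr^s(w) = 1$, I want to produce a word $w' \in \Sigma^{k-1}$ with $\pr^s(w') = 1$.

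The natural candidate is $w' = w \downarrow (k-1)$, the $(k-1)$-prefix of $w$. The plan is to unfold the definition $\pr^s(w) = \pr^s(L^{-1}(w))$ and to observe monotonicity of the measure: the set of infinite runs whose first $k$ labels spell $w$ is contained in the set of runs whose first $k-1$ labels spell $w'$. Concretely, if $\run = s_1 s_2 \cdots$ is a run with $L(s_1)\cdots L(s_k) = w$, then in particular $L(s_1)\cdots L(s_{k-1}) = w'$, so $L^{-1}(w) \subseteq L^{-1}(w')$ when both are viewed as unions of cones in $\runs$. Consequently
\[ \pr^s(w') \;\geq\; \pr^s(w) \;=\; 1, \]
and since probabilities are bounded by $1$ we get $\pr^s(w') = 1$, witnessing that $s$ is $(k-1)$-deterministic.

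For the first assertion, I simply take the contrapositive: if $s$ were not $(k+1)$-branching, it would be $(k+1)$-deterministic, and by the argument above also $k$-deterministic, contradicting the hypothesis that $s$ is $k$-branching. There is no real obstacle here beyond carefully reading the notation $\pr^s(w) = \pr^s(L^{-1}(w))$ and noting that $L^{-1}$ is inclusion-preserving under taking prefixes of words; the argument is purely a monotonicity observation and requires no computation on the transition matrix.
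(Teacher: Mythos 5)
Your proof is correct, and it takes the dual route to the paper's. The paper argues the branching direction directly: it unfolds ``$k$-branching'' into the existence of two \emph{distinct} words $w,w'\in\Sigma^k$ with $\pr^s(w)>0$ and $\pr^s(w')>0$ (this reformulation implicitly uses that the probabilities of length-$k$ words sum to $1$, so that the absence of a probability-one word forces at least two positive-probability words), and then extends them by one letter to two distinct positive-probability words of length $k+1$. You instead argue the deterministic direction directly, truncating a probability-one word to its $(k-1)$-prefix and invoking monotonicity of the measure, and you recover the branching statement by contraposition, which is legitimate since the two assertions are indeed contrapositives of each other under the shift $k\mapsto k+1$. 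Both are one-step consequences of the definitions; yours is marginally more self-contained in that it needs only monotonicity of $\pr^s$ (and you correctly note that the inclusion $L^{-1}(w)\subseteq L^{-1}(w\downarrow(k-1))$ must be read at the level of the induced cone-events in $\runs$, since as sets of paths the two sides consist of paths of different lengths), whereas the paper's version needs the small extra observation about two positive-probability words but avoids passing through the contrapositive.
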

\myspace

%

\noindent
\begin{minipage}{0.65\textwidth}
\begin{example}
Every state is trivially $1$-deterministic. 
In Figure~\ref{fig:ex_with_thm2_b}, the leftmost state is $3$-deterministic and $4$-branching.
The states of the MC on the right are deterministic.

\end{example}
\end{minipage}
~~~~
\begin{minipage}{0.3\textwidth}
 \begin{tikzpicture}[node distance=2cm]
 \node[state,initial text=] (s) at (0,0){$a$};
 \node[state] (t) [right of=s] {$a$};
 \path[->]
 (s) edge [loop left] node[left]{$0.5$} (s)
 (s) edge [bend right] node[below]{$0.5$} (t)
 (t) edge [bend right] node[above]{$1$} (s);
 \end{tikzpicture}
\end{minipage}


\begin{lemma} 
\label{lemma:det}
Consider a state $s$ in a Markov chain $\Mc$ with $n$ states.
If state $s$ is $n^2$-deterministic, then it is 
deterministic.
\label{lem:br}
\end{lemma}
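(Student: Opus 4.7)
The plan is to reformulate $k$-determinism in terms of one-step reachable sets and then bound the first ``divergence'' depth via a product-graph BFS argument. First I would introduce $R_i(s) \subseteq S$, the set of states reachable from $s$ in exactly $i$ transitions with positive probability. Observe that $\pr^s(w) = 1$ for some $w$ of length $k$ holds precisely when, for each $i \in \{0, 1, \ldots, k-1\}$, all states in $R_i(s)$ share the same label (namely $w[i+1]$). Let $d$ denote the smallest index $i$ for which $R_i(s)$ contains two states with different labels, setting $d = \infty$ if no such $i$ exists. Then $s$ is $k$-deterministic iff $d \geq k$, and $s$ is deterministic iff $d = \infty$.

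Next I would construct a product graph $G$ on the vertex set $S \times S$, with a directed edge $(u, v) \to (u', v')$ whenever $\Pm(u, u') > 0$ and $\Pm(v, v') > 0$. A straightforward induction on $i$ shows that the set of nodes reachable from $(s, s)$ in exactly $i$ steps of $G$ is precisely $R_i(s) \times R_i(s)$: the base case $i = 0$ is trivial, and the inductive step uses that edges in $G$ correspond to pairs of positive-probability transitions in $\Mc$. Consequently $d$ equals the length of a shortest path in $G$ from $(s, s)$ to a disagreeing pair $(u, v)$ with $L(u) \neq L(v)$, and $d = \infty$ iff no such pair is reachable from $(s,s)$ in $G$.

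The key observation is then that $G$ has only $n^2$ vertices, so if any disagreeing pair is reachable from $(s, s)$ at all, the shortest path to it has length at most $n^2 - 1$ (the standard diameter bound via BFS). Hence $d$ lies in $\{0, 1, \ldots, n^2 - 1\} \cup \{\infty\}$. Since by hypothesis $s$ is $n^2$-deterministic we have $d \geq n^2$, which forces $d = \infty$, so $s$ is deterministic.

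I do not anticipate a serious obstacle; the only fiddly point is verifying precisely that reachability in the product $G$ captures simultaneous reachability in $\Mc$, so that the first appearance of differently labelled states in some $R_i(s)$ coincides with the shortest-path distance in $G$ to a label-disagreeing pair. Once that correspondence is nailed down, the counting argument on $|V(G)| = n^2$ finishes the proof immediately.
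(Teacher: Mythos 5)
Your proof is correct, and it takes a genuinely different route from the paper's. The paper argues by contradiction: it takes the minimal $N>n^2$ at which $s$ is $N$-branching, fixes two witness paths that agree on labels up to step $N-1$ and disagree at step $N$, applies the pigeonhole principle to the $n^2$ possible \emph{pairs} of states along these two paths to find a repeated pair, and cuts out the intervening cycle to obtain shorter diverging paths, contradicting minimality (via the monotonicity Lemma~\ref{lem:det-branch}). You instead give a direct argument: you characterize $k$-determinism as label-homogeneity of the exact-$i$-step reachable sets $R_i(s)$ for $i<k$, prove by induction the key fact that the set of pairs reachable from $(s,s)$ in exactly $i$ steps of the product graph is the full square $R_i(s)\times R_i(s)$ (this is the step that replaces the paper's pigeonhole on pairs, and your verification of it is the point that needed care --- it works precisely because the induction hypothesis covers mixed pairs $(u,v)$ with $u\neq v$), and then invoke the diameter bound $n^2-1$ on the $n^2$-vertex product graph in place of the cycle-cutting surgery. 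Both proofs ultimately rest on counting pairs of states, but yours avoids the contradiction setup, does not need Lemma~\ref{lem:det-branch}, and in fact yields the slightly sharper conclusion that any branching state is already $n^2$-branching (the first label divergence occurs within $n^2-1$ steps); the paper's path-pumping argument is more elementary in that it needs no auxiliary graph construction.
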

Before proceeding to the proof, notice that even though 
it may seem that every branching state must be $n+1$ branching, this is not the case in general. Observe the counterexample in Fig.~\ref{fig:ex_with_thm2_a}. 
The leftmost state is $6$-deterministic (only the word $aaabaa$ can be generated), while $n=4$.

\begin{figure}[h]
	\centering\mybigspace
	\scalebox{1}{
		\begin{tikzpicture}[node distance=2cm]
		\node[state,initial text=] (s) at (0,0){$a$};
		\node[state] (t) [right of=s] {$a$};
		\node[state] (u) [right of=t] {$a$};
		\node[state] (w) [right of=u] {$b$};
		\path[->]
		(s) edge [right] node[below]{$1$} (t)
		(t) edge [right] node[below]{$1$} (u)
		(u) edge [right] node[below]{$1$} (w)
		(w) edge [bend right] node[above,pos=0.8]{$0.5$} (s)	
		(w) edge [bend right] node[above,pos=0.9]{$0.5$} (t);
		\end{tikzpicture}
	}\myspace
	\caption{Markov chain with $4$ states. The leftmost state is $6$-deterministic, but not deterministic.}
	\label{fig:ex_with_thm2_a}
	\mybigspace
\end{figure}

\begin{proof}
Consider state $s$ that is $n^2$-deterministic and assume for contradiction that $s$ is not 
deterministic.
Let $N>n^2$ be the smallest number such that $s$ is $N$-branching, and thus not $(N-1)$-branching.
Then there exist two paths $\pi = s_1,s_2,\ldots,s_N$ and $\pi'=s_1,s_2',\ldots, s_N'$ such that $s_1 = s$ and
for $i=1,2,\ldots,N-1$, we have $L(s_i)=L(s_i')$ and $L(s_N)\neq L(s_N')$.
Looking at a sequence of pairs $(s_1,s_1), (s_2,s_2'), \ldots, (s_{N-1},s_{N-1}')$, since there are at most $n^2$ possible pairs of states over $S$, by the pigeon-hole principle at least two pairs will be repeating in the observed sequence, say $(s_i,s_i')=(s_j,s_j')$, where $i<j$. 
But then the paths $\pi'' = s_1,s_2,\ldots,s_i,s_{j+1},\ldots,s_N$ and $\pi'''=s_1,s_2,\ldots,s_i,s_{j+1},\ldots,s_N$ have $M<N$ states and they witness that $s_1$ is $M$-branching, which by Lemma~\ref{lem:det-branch} is in contradiction with $s$ being $(N-1)$-deterministic.
\end{proof}

\begin{lemma}
\label{lemma:prob}
If a state $s\in S$ is $k$-branching, then any word of length $k$ starting from $s$ has probability 
 at most $(1-\pmin^{k-1})$, i.e., $\forall w\in\Sigma^k~:~\pr^s(w)\leq 1-\pmin^{k-1}$.
\label{lem:bound}
\end{lemma}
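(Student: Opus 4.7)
The plan is to extract a single length-$k$ path from $s$ whose label differs from $w$ and whose probability is at least $\pmin^{k-1}$; this alone suffices, since the cone of such a path is disjoint from $L^{-1}(w)$.

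First I would unpack the $k$-branching hypothesis: by definition, no word of length $k$ has probability $1$ from $s$, so in particular $\pr^s(w) < 1$. Since the family $\{L^{-1}(w') : w' \in \Sigma^k\}$ partitions the runs, we have $\sum_{w' \in \Sigma^k} \pr^s(w') = 1$, and so some $w' \neq w$ satisfies $\pr^s(w') > 0$. Unfolding $\pr^s(w') = \sum_{\pi \in L^{-1}(w')} \pr^s(\cone(\pi))$ as a sum over length-$k$ paths starting at $s$, at least one summand is strictly positive, producing a concrete path $\pi'$ with $\pr^s(\cone(\pi')) > 0$ and $L(\pi') = w' \neq w$.

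The quantitative step is now immediate: $\pr^s(\cone(\pi'))$ factors as $\prod_{i=1}^{k-1} \Pm(\pi'[i], \pi'[i+1])$, a product of $k-1$ transition probabilities. Each factor must be nonzero (since the product is), and every nonzero transition probability is at least $\pmin$ by assumption, so $\pr^s(\cone(\pi')) \geq \pmin^{k-1}$. Because $\cone(\pi') \cap L^{-1}(w) = \emptyset$, we conclude
\[ \pr^s(w) \leq 1 - \pr^s(\cone(\pi')) \leq 1 - \pmin^{k-1}. \]

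The one subtle point is the passage from the purely qualitative assertion ``some competing word has positive probability'' to a single witness path in the MC: it is crucial to select a concrete path $\pi'$ rather than merely a competing word $w'$, so that the probability genuinely decomposes into $k-1$ transition factors and the $\pmin$ lower bound can be applied pointwise. Beyond that bookkeeping, there is no real obstacle.
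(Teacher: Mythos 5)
Your proof is correct and follows essentially the same route as the paper's: find a competing word $w'\neq w$ with positive probability (which the $k$-branching hypothesis guarantees), witness it by a concrete path of $k-1$ transitions each of probability at least $\pmin$, and subtract. The only difference is that you spell out the partition argument and the explicit path extraction, which the paper leaves implicit.
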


To illustrate this, observe the Markov chain in Fig.~\ref{fig:ex_with_thm2_b} with leftmost initial state.
\begin{figure}[h]
\centering\myspace
\scalebox{1}{
\begin{tikzpicture}[node distance=2cm]
\node[state,initial text=] (s) at (0,0){$a$};
\node[state] (t) [right of=s] {$a$};
\node[state] (u) [right of=t] {$a$};
\node[state] (w) [right of=u] {$b$};
\path[->]
(s) edge [loop left] node[left]{$1-\pmin$} (s)
(s) edge [right] node[below]{$\pmin$} (t)
(t) edge [right] node[below]{$\pmin$} (u)
(u) edge [right] node[below]{$\pmin$} (w)
(t) edge [bend right] node[above]{$1-\pmin$} (s)
(u) edge [bend right] node[above]{$1-\pmin$} (t)
(w) edge [bend right] node[above]{$1-\pmin$} (u)
(w) edge [loop right] node[right]{$\pmin$} (w);
\end{tikzpicture}
}\myspace
\caption{Markov chain, s.t.\ $\pr(a)=\pr(aa) = \pr(aaa) =1$, $\pr(aaab)=\pmin^3$, $\pr(aaaa)=1-\pmin^3$.}
\label{fig:ex_with_thm2_b}
\mybigspace
\end{figure}

\begin{proof}
	Let $w\in\Sigma^k$.
	Since $s$ is $k$-branching, there exists a word $w'\in\Sigma^k$ such that
	$w'\neq w$ and $\pr^s(w')>0$.
	Hence there exists at least one path with $k-1$ transitions, producing the trace $w'$, and thus $\pr^s(w')\geq \pmin^{k-1}$. Finally, $\pr^s(w)\leq 1-\pr^s(w') \leq 1-\pmin^{k-1}$. 
\end{proof}





We show that, for estimating the finite trace distance with the required precision $\epsilon$, it suffices to infer probabilities of the words up to some finite length $k$, which depends on $\epsilon$.
The idea is that paths that become deterministic before step $k$ 
do not change their probability afterwards, while
all other paths together have the probability bounded by $\epsilon$. 

\begin{lemma}
\label{lemma:sth}
  Let $s$ be a $n^2$-deterministic state in a Markov chain $\Mc$ with $n$ states. Then
  there are words $u,z$, such that $|z|+|u|\leq n$, $|u|\geq 1$, and
  \( \pr^s(z u^\omega) = 1~. \)
\end{lemma}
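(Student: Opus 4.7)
The plan is to first upgrade $n^2$-determinism to full determinism via Lemma~\ref{lem:br}, and then extract an eventual period of length at most $n$ by pigeonhole on any concrete run of $\Mc$ starting from $s$.

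By Lemma~\ref{lem:br}, the state $s$ is deterministic, so for every $k$ there is a unique $w_k\in\Sigma^k$ with $\pr^s(w_k)=1$, and every other length-$k$ word has probability $0$ from $s$. A key intermediate observation is that determinism is hereditary: any state $s'$ reachable from $s$ with positive probability is itself deterministic. Otherwise two distinct length-$k$ words of positive probability from $s'$, concatenated behind a positive-probability path from $s$ to $s'$, would yield two distinct words of the same length with positive probability from $s$, contradicting determinism of $s$.

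Now pick any infinite run $\rho=s_1 s_2\cdots$ with $s_1=s$. Among the $n+1$ states $s_1,\ldots,s_{n+1}$, the pigeonhole principle yields indices $1\le i<j\le n+1$ with $s_i=s_j$. Set $z=L(s_1)\cdots L(s_{i-1})$ (possibly empty) and $u=L(s_i)\cdots L(s_{j-1})$, so that $|z|+|u|=j-1\le n$ and $|u|=j-i\ge 1$. Since $s_i=s_j$ are both deterministic, their unique probability-$1$ traces coincide, forcing $L(s_{i+m})=L(s_{j+m})$ for all $m\ge 0$; hence the suffix $L(s_i)L(s_{i+1})\cdots$ equals $u^\omega$, and the unique probability-$1$ trace from $s$ is $zu^\omega$, giving $\pr^s(zu^\omega)=1$. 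The only conceptually subtle point is the hereditary observation above; after that the bound on $|z|+|u|$ is forced by pigeonhole and is tight in this argument, attained when $i=1$ and $j=n+1$.
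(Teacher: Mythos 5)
Your proof is correct and follows essentially the same route as the paper's: take any run from $s$, use Lemma~\ref{lem:br} to upgrade to full determinism, find a repeated state among the first $n+1$ positions by pigeonhole, and read off $z$ and $u$ from the prefix and the cycle. The only difference is cosmetic: you make explicit (with a short argument) that determinism is inherited by states reachable from $s$, a step the paper's proof asserts in passing (``and thus $t$ as well''), and you derive the periodicity of the suffix trace from equality of the probability-$1$ traces rather than from repeating $u$ with positive probability --- both are sound formulations of the same idea.
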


This motivates the following definition, where $\pref{w}$ denotes the set of all prefixes of the ($\omega$-)word $w$.

\begin{definition}
\label{def:ep}
A word $w\in \Sigma^+$ is called \emph{$k$-ultimately periodic in a Markov chain $\Mc$} if $\pr(w)> 0$ and there exists a word $u$ such that
$\word \in \pref{\Sigma^k u^\omega}$ and $1\leq |u|\leq n$, where $n$ is the number of states in $\Mc$.\qed
\end{definition}

Intuitively, for sufficiently long word $w$ and large $\epsilon$, if $\pr(w)>\epsilon$ and $w$ is $k$-ultimately periodic, then it enters within $k$ steps a BSCC, which is bisimilar to a cycle (all transition probabilities are $1$). One can also prove that this is the only way for a $\omega$-word to achieve a probability greater than $\epsilon$.

\medskip

For a word $w$ we write $B^k(w)$ for the set of paths that are labelled by $w$, have a positive probability and where all states up to step $k$ are $n^2$-branching:
  \[ B^k(w) = \{ \ppath = s_1\cdots s_{|w|}\in \Lab^{-1}(w)~|~\pr(\ppath)>0 \land \forall i\leq \min (k,|w|).~s_i \text{ is $n^2$-branching} \}~. \]
In a similar way, we write $D^k(w)$ for the set of paths that enter a ($n^2$-)deterministic state before step $k$
  \[ D^k(w) = \{ \ppath=s_1\cdots s_{|w|}\in \Lab^{-1}(w)~|~\pr(\ppath)>0 \land \exists i\leq \min (k,|v|).~s_i \text{ is $n^2$-deterministic} \}~. \]
For any $k$, we can partition paths labeled by $w$ into $B^k$-paths and $D^k$-paths:
\begin{align}
\pr(w) &= \sum_{\pi\in \Lab^{-1}(w)} \pr(\pi) 
= \sum_{\pi\in B^k(w)} \pr(\pi)  +\sum_{\pi\in D^k(w)} \pr(\pi)  \label{lep0}~.
\end{align}

Now we show that the probability of $B^k$-paths diminishes exponentially with length $k$:
\begin{lemma}
\label{lemma:dim}
  Consider a Markov chain $\Mc$ with $n$ states. For every $k\in \Nset$ and word $w$, if $|w|> k$ then
  \[ \sum_{\pi
  	\in B^{k}(w)} \pr(\pi
  ) \leq (1-\pmin^{n^2})^{\lfloor \frac{k}{n^2}\rfloor}~. \]
\end{lemma}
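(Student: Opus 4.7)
The plan is to partition the word $w$ into consecutive blocks of length $n^2$ and to exploit the fact that the first state of each block must be $n^2$-branching for any path in $B^k(w)$. Set $m = \lfloor k/n^2 \rfloor$, and for $j=1,\ldots,m$ let $v_j = w_{(j-1)n^2+1}\cdots w_{jn^2}$ be the $j$-th block. The assumption $|w| > k \geq mn^2$ ensures all $m$ blocks lie entirely within $w$, and the block-starting indices $1, n^2+1, \ldots, (m-1)n^2+1$ all lie in the range $\{1,\ldots,k\}$, so by definition of $B^k(w)$ every such block-start state on a path $\pi\in B^k(w)$ is $n^2$-branching.

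Next I would set up an iterated conditioning argument. For $j=1,\ldots,m$, let $T_j$ be the event that the path, during steps $(j-1)n^2+1$ through $jn^2$, produces label $v_j$ and starts these steps in an $n^2$-branching state. Then $\sum_{\pi\in B^k(w)}\pr(\pi) \leq \pr(T_1\cap\cdots\cap T_m)$, since every path in $B^k(w)$ satisfies all $T_j$ (we are simply dropping the constraints on intermediate states and on steps beyond $mn^2$). Using the Markov property at step $(j-1)n^2+1$:
\[
\pr(T_1\cap\cdots\cap T_j) = \sum_{s\ n^2\text{-br.},\, L(s)=w_{(j-1)n^2+1}} \pr\big(T_1\cap\cdots\cap T_{j-1}\cap s_{(j-1)n^2+1}=s\big)\cdot \pr^s(v_j).
\]

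Now I would invoke Lemma~\ref{lemma:prob} (with $k$ replaced by $n^2$): since each such $s$ is $n^2$-branching, $\pr^s(v_j) \leq 1 - \pmin^{n^2-1}$. Factoring this bound out of the sum yields the recursion $\pr(T_1\cap\cdots\cap T_j) \leq (1-\pmin^{n^2-1})\cdot \pr(T_1\cap\cdots\cap T_{j-1})$. Unrolling $m$ times gives $\pr(T_1\cap\cdots\cap T_m)\leq (1-\pmin^{n^2-1})^m$. Finally, since $\pmin\leq 1$ we have $\pmin^{n^2-1}\geq \pmin^{n^2}$, so $1-\pmin^{n^2-1}\leq 1-\pmin^{n^2}$, which yields the claimed bound $(1-\pmin^{n^2})^{\lfloor k/n^2\rfloor}$.

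The main obstacle is the bookkeeping around the Markov step: applying Lemma~\ref{lemma:prob} requires that at the start of each block the path is genuinely in an $n^2$-branching state, independent of what happened before, and the conditional probability $\pr^s(v_j)$ truly bounds the contribution of that block. Care is needed because the property ``$s$ is $n^2$-branching'' is a property of the state in $\Mc$ (not of the history), which is what makes the factorization clean. A minor subtlety is the indexing $|w|>k$ versus $mn^2\leq k$, which needs to be checked so that every block-start index falls within the ``$\leq k$'' window where the branching constraint is enforced.
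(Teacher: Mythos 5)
Your proof is correct and follows essentially the same strategy as the paper's: decompose the branching prefix into $\lfloor k/n^2\rfloor$ blocks and apply Lemma~\ref{lemma:prob} once per block, chaining the factors via the Markov property. The only differences are cosmetic --- you take disjoint blocks of $n^2$ states from the front (getting the slightly stronger per-block factor $1-\pmin^{n^2-1}$, then weakening), while the paper peels blocks of $n^2$ transitions from the end of the $k$-prefix.
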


\begin{lemma}
\label{lemma:ep}
Let $w$ be a word in a Markov chain $\Mc$ with $n$ states. 
For every $\epsilon>0$,
if $\pr(w) > \epsilon$ and $|w|>k$ then $w$ is $k$-ultimately periodic in $\Mc$, where
$k=n^2\lceil\frac{\log \epsilon}{\log (1-\pmin^{n^2})}\rceil+n$.
\end{lemma}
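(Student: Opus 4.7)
My plan is to combine the partition~\eqref{lep0} with Lemmas~\ref{lemma:dim} and~\ref{lemma:sth}. Set $k':=n^2\lceil\frac{\log \epsilon}{\log (1-\pmin^{n^2})}\rceil$, so that the target threshold is $k=k'+n$. Since $|w|>k>k'$, Lemma~\ref{lemma:dim} applies and yields
\[ \sum_{\pi\in B^{k'}(w)} \pr(\pi) \;\leq\; (1-\pmin^{n^2})^{\lfloor k'/n^2\rfloor}\;\leq\; \epsilon, \]
where the second inequality uses that $1-\pmin^{n^2}<1$ together with $\lfloor k'/n^2\rfloor$ being exactly the ceiling appearing in the exponent. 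Combined with the hypothesis $\pr(w)>\epsilon$ and the decomposition~\eqref{lep0}, this forces $\sum_{\pi\in D^{k'}(w)} \pr(\pi)>0$, so there exists a concrete path $\pi=s_1\cdots s_{|w|}\in \Lab^{-1}(w)$ with $\pr(\pi)>0$ that visits an $n^2$-deterministic state $s_i$ at some step $i\leq k'$.

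I then invoke Lemma~\ref{lemma:sth} on $s_i$ to obtain words $z$ and $u$ satisfying $1\leq|u|$, $|z|+|u|\leq n$, and $\pr^{s_i}(zu^\omega)=1$. Because the particular path $\pi$ has positive probability, its labelling from position $i$ onward must coincide with the length-$(|w|-i+1)$ prefix of $zu^\omega$; equivalently, $w$ is a prefix of the infinite word $v := w[1..i-1]\cdot z\cdot u^\omega$.

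The last step is to massage $v$ into the shape required by Definition~\ref{def:ep}. The pre-periodic part of $v$ has length $(i-1)+|z|\leq (k'-1)+(n-1)<k$, while $|w|>k$ by hypothesis, so the length-$k$ prefix of $v$ necessarily reaches into the periodic tail. Writing $v=\sigma\cdot u'^\omega$, where $\sigma\in\Sigma^k$ is that length-$k$ prefix and $u'$ is the cyclic rotation of $u$ aligned at position $k$, one gets $1\leq|u'|=|u|\leq n$ and $w\in\pref{\sigma\cdot u'^\omega}\subseteq\pref{\Sigma^k u'^\omega}$, so $w$ is $k$-ultimately periodic (the side condition $\pr(w)>0$ is immediate from $\pr(w)>\epsilon$). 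The main subtlety I expect is precisely this bookkeeping: one must check that the extra ``$+n$'' in the definition of $k$ is large enough to swallow both $z$ and the cyclic shift of $u$ needed to expose the ``$\Sigma^k$-then-loop'' pattern, which is exactly why the slack of $n\geq |z|+|u|$ is the right amount.
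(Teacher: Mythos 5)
Your proof is correct and takes essentially the same route as the paper: split $\Lab^{-1}(w)$ into $B^{k-n}(w)$ and $D^{k-n}(w)$ (your $k'=k-n$), bound the branching part by $\epsilon$ via Lemma~\ref{lemma:dim}, conclude $D^{k-n}(w)\neq\emptyset$, and apply Lemma~\ref{lemma:sth} to a deterministic state reached before step $k-n$. The only (harmless) difference is that you make explicit the final padding/rotation of $u$ needed to match the $\pref{\Sigma^k u^\omega}$ shape of Definition~\ref{def:ep}, which the paper treats implicitly via its ``w.l.o.g.\ $s_{k-n}$'' normalization.
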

\begin{proof}
  Assume that $|w|> k$. We
  split paths labelled by $w$ into $B^{k-n}(w)$ and $D^{k-n}(w)$ as in (\ref{lep0}):
  \begin{align}
    \pr(w) &= \sum_{s_1\cdots s_{|w|}\in \Lab^{-1}(w)} \pr(s_1\cdots s_{|w|}) 
       = \sum_{\substack{s_1\cdots s_{|w|}\in\\ B^{k-n}(w)}} \pr(s_1\cdots s_{|w|})  +\sum_{\substack{s_1\cdots s_{|w|}\in\\ D^{k-n}(w)}} \pr(s_1\cdots s_{|w|})
  \label{lep1}~.
  \end{align}
  By Lemma~\ref{lemma:dim} we get
  \begin{equation}
\label{lep2}
    \sum_{s_1\cdots s_{|w|}\in B^{k-n}(w)} \pr(s_1\cdots s_{|w|}) \leq \epsilon~.
  \end{equation}
  Now, from the assumption $\pr(w)> \epsilon$, \eqref{lep1} and \eqref{lep2}, it follows that
  \[ \sum_{s_1\cdots s_{|w|}\in D^{k-n}(w)} \pr(s_1\cdots s_{|w|}) > 0~. \]
This implies that there is a path $\ppath = s_1\cdots s_{|w|}\in D^{k-n}(w)$. By definition of $D^{k-n}(w)$, $\ppath$ has a $n^2$-deterministic state before step $k-n$, and
w.l.o.g.\ let $s_{k-n}$ be that state. By Lemma~\ref{lemma:sth}, every positive word from state $s_{k-n}$ is a prefix of $zu^\omega$ for some words $z,u$ such that $|z|+|u|\leq n$.
Therefore $w\in \pref{yzu^\omega}$, where $y=\Lab(s_1\cdots s_{k-n})$, i.e. $w$ is $|k|$-ultimately periodic.
\end{proof}

\begin{lemma}
\label{lemma:ep2}
Consider a Markov chain $\Mc$ with $n$ states. Let $w$ be a $k$-ultimately periodic word in $\Mc$, and $x$ be a prefix of $w$ such that $|x|> k+n$. Then \myspace

  \[ \pr(x) - \pr(w) \leq  (1-\pmin^{n^2})^{\lfloor \frac{k-n}{n^2}\rfloor}~.\]
\end{lemma}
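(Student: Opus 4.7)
The plan is to partition the positive-probability paths in $\Lab^{-1}(x)$ into $B^{k-n}(x)$ and $D^{k-n}(x)$ as in~\eqref{lep0}, giving
\[ \pr(x) \;=\; \sum_{\pi\in B^{k-n}(x)} \pr(\pi) \;+\; \sum_{\pi\in D^{k-n}(x)} \pr(\pi).\]
The first (``branching'') sum is bounded directly by Lemma~\ref{lemma:dim} applied with parameter $k-n$, which is permitted since $|x|>k+n>k-n$; this already matches the right-hand side of the target inequality. The remaining task, and the heart of the argument, is to show that every $\pi\in D^{k-n}(x)$ extends with probability one to an $\omega$-run whose trace has $w$ as a prefix, so that the ``deterministic'' sum is absorbed into $\pr(w)$ and contributes nothing to $\pr(x)-\pr(w)$.

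To establish this, fix $\pi\in D^{k-n}(x)$ and let $s_i$ with $i\le k-n$ be its first $n^2$-deterministic state. Lemma~\ref{lemma:sth} supplies words $z,u$ with $|z|+|u|\le n$ such that the labels emitted from $s_i$ onward are almost surely $zu^\omega$; hence the trace $\sigma_\pi$ of any extension of $\pi$ is ultimately periodic, and from position $k+1$ onward (since $i+|z|\le(k-n)+n=k$) it is purely periodic with period $|u|\le n$. On the other hand, the $k$-ultimate periodicity of $w$ yields a word $u'$ with $|u'|\le n$ such that the $\omega$-word $w^\infty := w[1..k]\,(u')^\omega$ has $w$ as a prefix and is purely periodic of period $|u'|\le n$ from position $k+1$. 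Because $\sigma_\pi[1..|x|] = x = w^\infty[1..|x|]$, the two purely periodic tails $\sigma_\pi[k+1..\infty]$ and $w^\infty[k+1..\infty]$ agree on their first $|x|-k > n$ positions; the Fine--Wilf theorem then forces the two tails to coincide, and therefore $\sigma_\pi[1..|w|] = w$, so $\pi$'s extension contributes in full to $\pr(w)$.

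Combining the two parts,
\[ \pr(x) - \pr(w) \;\le\; \sum_{\pi\in B^{k-n}(x)} \pr(\pi) \;\le\; (1-\pmin^{n^2})^{\lfloor (k-n)/n^2\rfloor},\]
as required. The periodic-alignment step is the main obstacle: one must verify that the overlap window of length $|x|-k>n$ is wide enough compared to the two periods (each bounded by $n$) to pin down both $\omega$-tails uniquely, which is precisely what the hypothesis $|x|>k+n$ is designed to buy and where Fine--Wilf (or an equivalent elementary combinatorics-on-words argument) is invoked.
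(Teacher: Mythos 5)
Your overall skeleton is the same as the paper's: split $\Lab^{-1}(x)$ into $B^{k-n}(x)$ and $D^{k-n}(x)$, bound the branching part by Lemma~\ref{lemma:dim}, and absorb the deterministic part into $\pr(w)$ (the paper phrases this last step as the equality $SD_1=SD_2$ between the deterministic parts of the decompositions of $\pr(x)$ and $\pr(w)$). You have also correctly isolated the crux: for every $\pi\in D^{k-n}(x)$ one must argue that the almost-sure continuation $y_\pi z_\pi u_\pi^\omega$ supplied by Lemma~\ref{lemma:sth} has $w$ as a prefix.

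The problem is that your justification of this crux by Fine--Wilf does not go through quantitatively. The two tails are periodic with periods $p=|u_\pi|\le n$ and $q=|u'|\le n$, and the hypothesis $|x|>k+n$ only guarantees agreement on $|x|-k\ge n+1$ positions, whereas Fine--Wilf needs agreement on at least $p+q-\gcd(p,q)$ positions, which can be as large as $2n-2$. A window of length $n+1$ is therefore not ``wide enough compared to the two periods'': for $n=5$, the words $(abaab)^\omega$ and $(aba)^\omega$ (periods $5$ and $3$) agree on their first $6=n+1$ letters, namely $abaaba$, and differ at the seventh. So the sentence ``the Fine--Wilf theorem then forces the two tails to coincide'' is exactly the step that fails, and with it the conclusion that every $D^{k-n}$-path contributes in full to $\pr(w)$. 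Note that such a mismatch is not merely combinatorially conceivable but realizable inside a Markov chain (a high-probability deterministic lasso whose labels follow one period, plus a small-probability fully branching gadget that generates $w$ along a different period agreeing with the lasso for only about $n$ letters past position $k$), so the gap cannot be dismissed as a technicality. To be fair, the paper's own proof simply asserts at the corresponding point that $w\in\pref{yzu^\omega}$ without argument, so you have put your finger on the genuine difficulty; but to make a Fine--Wilf argument work one needs an overlap of length at least $|u_\pi|+|u'|-\gcd(|u_\pi|,|u'|)$ --- e.g.\ a hypothesis of the form $|x|>k+2n$ --- or else an additional argument tying the period of $w$'s tail to the very cycle that $\pi$ enters, neither of which your write-up (or the stated hypothesis $|x|>k+n$) provides.
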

\myspace

\begin{theorem}
\label{thm:imp}
  Consider Markov chains $\Mc_1$ and $\Mc_2$ that have at most $n$ states.
  For $\epsilon > 0$ it holds that
  \mybigspace\myspace
  
  \[ |\dists(\Mc_1,\Mc_2)-\max_{i\leq k}\dists^i (\Mc_1,\Mc_2)|\leq \epsilon,\quad\text{where}\quad k=n^2\lceil\frac{\log \epsilon}{\log (1-\pmin^{n^2})}\rceil+2n.\]
\end{theorem}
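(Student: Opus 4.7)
The inequality $\max_{i\leq k}\dists^i(\Mc_1,\Mc_2) \leq \dists(\Mc_1,\Mc_2)$ is immediate from the definitions, since the latter supremum is taken over a strictly larger class of words, so the substance of the proof is establishing $\dists \leq \max_{i\leq k}\dists^i + \epsilon$. I will argue this pointwise: for every finite word $w$,
\[
|\pr_1(w) - \pr_2(w)| \leq \max_{i\leq k}\dists^i(\Mc_1,\Mc_2) + \epsilon.
\]
When $|w| \leq k$ this is immediate from $|\pr_1(w) - \pr_2(w)| \leq \dists^{|w|}$. When $|w| > k$, I take $x := w_{[1..k]}$ and exploit
\[
|\pr_1(w) - \pr_2(w)| \leq |\pr_1(x) - \pr_2(x)| + (\pr_1(x) - \pr_1(w)) + (\pr_2(x) - \pr_2(w)),
\]
which is valid since $x$ is a prefix of $w$ and hence $\pr_i(x) \geq \pr_i(w)$. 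As $|\pr_1(x) - \pr_2(x)| \leq \dists^k \leq \max_{i\leq k}\dists^i$, the task reduces to bounding the two corrective terms $\pr_i(x) - \pr_i(w)$ so that together they are at most $\epsilon$.

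The two workhorses are Lemmas~\ref{lemma:ep} and \ref{lemma:ep2}. Fixing a threshold $\theta$ of order $\epsilon$, whenever $\pr_i(w) > \theta$, Lemma~\ref{lemma:ep} forces $w$ to be $(k{-}n)$-ultimately periodic in $\Mc_i$, and Lemma~\ref{lemma:ep2} then delivers $\pr_i(x) - \pr_i(w) \leq \epsilon/2$ (the additive $2n$ in the definition of $k$ supplies exactly the margin needed for the hypothesis $|x| > (k{-}n) + n$ of Lemma~\ref{lemma:ep2}). This gives the claim immediately if either both $\pr_i(w) > \theta$ (both corrective terms are controlled), or both $\pr_i(w) \leq \theta$ (in which case $|\pr_1(w) - \pr_2(w)| \leq \theta$ trivially). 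The only remaining case is the mixed one; WLOG $\pr_1(w) > \theta$ and $\pr_2(w) \leq \theta$. Lemma~\ref{lemma:ep2} still handles the $\Mc_1$ corrective term, and if in addition $\pr_2(x) \leq \theta$ then $\pr_2(x) - \pr_2(w) \leq \pr_2(x) \leq \theta$ and we are done.

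The substantive subcase is $\pr_1(w) > \theta$, $\pr_2(w) \leq \theta$, \emph{and} $\pr_2(x) > \theta$, where $w$ need not be ultimately periodic in $\Mc_2$ but $x$ is. Here $x$ is $(k{-}n)$-ultimately periodic in $\Mc_2$ with some periodic tail $u_2^\omega$ of period $|u_2| \leq n$ (by Lemma~\ref{lemma:sth}), while $w$ is ultimately periodic in $\Mc_1$ with some tail $u_1^\omega$, $|u_1| \leq n$. Both $u_1^\omega$ and $u_2^\omega$ must agree with the $2n$ letters $x_{k-2n+1..k}$ sitting past both u.p.\ onsets, and the Fine--Wilf theorem on combinatorics of words then forces $u_1^\omega = u_2^\omega$ (since $2n \geq |u_1| + |u_2| - \gcd(|u_1|,|u_2|)$). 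Consequently, every deterministic continuation of a path producing $x$ in $\Mc_2$ actually produces $w$, so the only $\Mc_2$-probability ``lost'' between $\pr_2(x)$ and $\pr_2(w)$ is the mass of non-deterministic paths, bounded by $\epsilon/2$ through Lemma~\ref{lemma:dim}. This yields $\pr_2(x) - \pr_2(w) \leq \epsilon/2$ and closes the proof.

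The main obstacle is precisely this Fine--Wilf-driven subcase: transferring a Lemma~\ref{lemma:ep2}-style estimate from $\Mc_1$ (where $w$ is u.p.) to $\Mc_2$ (where $w$ may fail to be u.p.) by matching their periodic continuations on the common prefix $x$. The choice of the additive $2n$ in the theorem's $k$ is engineered to guarantee enough letters of agreement for Fine--Wilf to apply.
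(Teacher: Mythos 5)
Your overall architecture (compare $w$ with its length-$k$ prefix $x$, and control the corrective terms $\pr_i(x)-\pr_i(w)$ via Lemmas~\ref{lemma:ep}, \ref{lemma:ep2} and \ref{lemma:dim}) matches the paper, and you correctly isolate the delicate situation: the chain in which $\pr(w)$ may be tiny while $\pr(x)$ is large. But your resolution of that subcase has a concrete gap: the Fine--Wilf step is applied to a window you have not established. With the threshold calibrated to the theorem's $k$, Lemma~\ref{lemma:ep} only gives $w\in\pref{\Sigma^{k-n}u_1^\omega}$ (in $\Mc_1$) and $x\in\pref{\Sigma^{k-n}u_2^\omega}$ (in $\Mc_2$), i.e.\ both ultimately-periodic onsets are guaranteed only at position $k-n$; hence only the last $n$ letters of $x$ --- not the $2n$ letters past position $k-2n$ --- are known to lie in both periodic zones. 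Since $|u_1|,|u_2|$ can each be as large as $n$, Fine--Wilf may need up to $|u_1|+|u_2|-\gcd(|u_1|,|u_2|)\approx 2n$ letters of common periodicity (periods $n$ and $n-1$ already require $2n-2$), and an $n$-letter doubly periodic window does not force the two periodic extensions to agree (e.g.\ a window $abca$ with periods $3$ and $4$ extends in two different ways). So the inference $u_1^\omega=u_2^\omega$, and with it ``every deterministic continuation of $x$ in $\Mc_2$ produces $w$,'' is not licensed by the lemmas as stated. The idea is repairable, but only by exploiting what is inside the proofs of Lemmas~\ref{lemma:ep} and \ref{lemma:sth} rather than their statements: the periodic part in fact begins at most $n-|u_i|$ letters after an $n^2$-deterministic state is reached (so by position roughly $k-n-|u_i|$), which yields a common window of length at least $n+\min(|u_1|,|u_2|)\geq |u_1|+|u_2|-\gcd(|u_1|,|u_2|)$; alternatively one must take $k$ larger than the theorem states. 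As written, the crucial subcase is not closed.

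Two further calibration points, and a comparison. Your $\epsilon/2$ budgets do not match the stated $k$: Lemma~\ref{lemma:dim} at level $k-2n$ yields exactly $(1-\pmin^{n^2})^{\lceil \log\epsilon/\log(1-\pmin^{n^2})\rceil}\leq\epsilon$, not $\epsilon/2$, and invoking Lemma~\ref{lemma:ep} with threshold $\epsilon/2$ would require a larger $k$ than the theorem provides. This is avoidable: from your own identity $\pr_1(w)-\pr_2(w)=(\pr_1(x)-\pr_2(x))+(a_2-a_1)$ with $a_i=\pr_i(x)-\pr_i(w)\geq 0$, assuming w.l.o.g.\ $\pr_1(w)\geq\pr_2(w)$ one only needs $a_2\leq\epsilon$, i.e.\ a single corrective term for the chain where $\pr(w)$ is smaller; this is exactly the paper's bookkeeping (its inequality $\pr_i(w)\leq\pr_i(w\downarrow k)\leq\pr_i(w)+\epsilon$ followed by the w.l.o.g.\ step), and it lets the full $\epsilon$ budget be spent once, keeping the stated $k$. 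Note finally that the paper's own proof proceeds by applying Lemma~\ref{lemma:ep} to the prefix $w\downarrow k$ and then citing Lemma~\ref{lemma:ep2}, without any cross-chain periodicity matching; your route through the mixed case is genuinely different, and the Fine--Wilf matching you attempt is precisely the part that would have to be made rigorous for it to go through.
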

\begin{proof}
  We show that for any word $w\in \Sigma^+$:\mybigspace\myspace
  
  \begin{equation}
    \label{eq:it1} 
   \Big||\pr_1(w)-\pr_2(w)| - |\pr_1(w\downarrow k) - \pr_2(w\downarrow k)| \Big| \leq \epsilon~.   
  \end{equation}\mybigspace
  
  \noindent
  For $|w|\leq k$ \eqref{eq:it1} holds trivially.
  Suppose that $|w|\geq k$ and consider two cases.
  \begin{enumerate}
  \item If $\pr_i(w\downarrow k)> \epsilon$, then by Lemma~\ref{lemma:ep}
$w\downarrow k$ is $(k-n)$-ultimately periodic. Then by Lemma~\ref{lemma:ep2} $\pr_i(w\downarrow k)\leq \pr_i(w)+\epsilon$.
\item If $\pr_i(w\downarrow k)\leq \epsilon$, then clearly $\pr_i(w\downarrow k)\leq \pr_i(w)+ \epsilon.$
  \end{enumerate}
Both cases can be summarised by\mybigspace

\begin{equation}
\label{eq:it2}
\pr_i(w)\leq \pr_i(w\downarrow k)\leq \pr_i(w)+\epsilon~.
\end{equation}
\mybigspace

\noindent
W.l.o.g assume that $\pr_1(w)\geq \pr_2(w)$. Then by \eqref{eq:it2}\mybigspace

\[ \pr_1(w\downarrow k)-\pr_2(w\downarrow k)\geq \pr_1(w) - \pr_2(w)-\epsilon, \]
\mybigspace

\noindent
which implies \eqref{eq:it1}. 
\end{proof}

\myspace\myspace


\myspace

\section{Consequences and Discussion} 
\label{sec:disc}
\myspace

We now discuss the consequences of the (in)estimability results for several
specific subclasses of $\omega$-regular languages, captured topologically, logically, or by automata. 
We also remark on the estimability in case when the transition probabilities have finite precision.
\myspace\myspace

\subsection{Topology}
\myspace

\noindent\textbf{Negative result for clopen sets}
Note that the proof of inestimability was based on the ability to express the events $E_n$ for any $n\in\Nset$:
\begin{quote}
$E_n=$ ``there is at most $c_n=(0.5+\tau/2)n$ symbols $b$ in the prefix path of length $n$.''
\end{quote}
Observe that each $E_n$ can be expressed as finite union of cones, each expressing exact positions of $a$'s and $b$'s in the first $n$ steps.
For instance, for $\tau =0.2$, the event $E_{2}$, ``there is at most $1$ symbol $b$ in the first $2$ steps,''  can be described by the union $\cone(aa)\cup\cone(ab)\cup\cone(ba)$.

Since finite unions of cones form exactly the clopen sets, the lowest class $\Delta_1$ in the Borel hierarchy, it follows that distances based on any class in the hierarchy are inestimable.
\smallskip

\noindent\textbf{Positive result for the infinite-trace distance} Using the result on finite-trace distance, we can prove that the infinite-trace distance $\disti$ of Example~\ref{ex:trac-dist} is also estimable. 
Indeed, the distance is non-zero only due to $k$-ultimately periodic $\omega$-words with positive probability.
By Lemma~\ref{lemma:ep2} we can provide confidence intervals for these probabilities through the $k$-prefixes using the fixed-length distance $\dists^k$.
\myspace\myspace

\subsection{Logic}
\myspace

\noindent\textbf{Negative result for LTL.} 
The LTL distance as in Example~\ref{ex:log-dist} is again inestimable since we can express the event $E_n$ in LTL by a finite composition of operators $\X,\wedge,\vee$ (notably this fragment induces the same distance as LTL~\cite{DBLP:conf/ictac/BacciBLM15}).
Indeed, 
for instance, for $\tau =0.2$, the event $E_{10}$, ``there is at most $6$ symbols $b$ in the path prefix of length $10$,'' is equivalent to ``at least $4$ symbols $a$ in the path prefix of length $n$,'' and it can be described by a disjunction of ${10\choose 4}$ formulae, each determining the possible position of symbols $a$, resulting in a formula 
$(a\wedge \X a\wedge \X^2 a\wedge\X^3 a)\vee (a\wedge \X a\wedge \X^2 a\wedge \X^4 a)\vee\ldots\vee (\X^7 a \wedge\X^8 a \wedge \X^9 a \wedge\X^{10} a)$.
\smallskip

\noindent\textbf{Positive result for \LTL(\F\G,\G\F).} The distance generated by the fragment of LTL described by combining operators $\F\G$ and $\G\F$ and Boolean operators is estimable. 
Notice that the probability of the property  $\varphi\equiv\F\G\varphi'$ equals the probability of reaching a BSCC such that $\varphi'$ holds in all of its states, 
while the probability of property $\varphi\equiv\G\F\varphi'$ equals the probability that every BSCC contains a state which satisfies $\varphi'$. 
Hence, properties expressed in this fragment of LTL can be checked by inferring all BSCCs of a chain and a simple analysis of them. 
The statistical estimation of all BSCCs for labelled Markov chains where only the minimal transition probability is known is possible and is shown in \cite{DBLP:journals/corr/DacaHKP15}.
\myspace\myspace

\subsection{Automata}
\myspace

\noindent\textbf{Negative result for automata distances.} 
For the class of all deterministic Rabin automata (DRA), the distance (as in Example~\ref{ex:auto-dist}) is inestimable. 
This is implied by the inestimability for clopen sets or for LTL. Further, we can also directly encode the event $E_n$ that ``at least $k$ symbols $a$ are observed in the path of length $n$'' by an automaton: 
the DRA counts how many symbols $a$ are seen in the prefix up to length $n$; this can be done with $k\cdot n$ states where the automaton is in a state $s_{k',n'}$ if and only if in the $n'\leq n$ prefix of the input word, there are $k'\leq k$ symbols $a$. 
\smallskip

\noindent\textbf{Positive result for fixed-size automata.}
When restricting to the class of DRA of size at most $k\in \Nset$, the distance $\dist_{DRA\leq k}$ can be estimated.
A naive algorithm amounts to enumerating all automata up to given size $k$, then applying statistical model checking to infer the probability of satisfying the automata in each of the Markov chains, and checking for which automaton the probability difference in the two chains is maximized. 
Statistically inferring the probability of whether a (black-box) Markov chain satisfies a property given by a DRA is a subroutine of the procedure for statistical model checking Markov chains for LTL, described in \cite{DBLP:journals/corr/DacaHKP15}.
\myspace\myspace

\subsection{Finite Precision}
\myspace

When the transition probabilities have finite precision, e.g.\ are given by at most two decimal digits, several negative results turn positive. 
Finite precision allows us to learn the MCs exactly with high probability, by rounding the learnt transition probabilities to the closest multiple of the precision.
Subsequently, we can approximate the distance by the algorithms applicable in the white-box setting.
In case of the total variation distance, one can apply the approximation algorithm of \cite{DBLP:conf/csl/ChenK14}; for trace distances, the approximation algorithm of \cite{DBLP:conf/ictac/BacciBLM15} is also available.
In particular, for the special case of the trace equivalence $\diste$ we can leverage the fact that Markov chains are equivalent when all their traces up to length $|\Mc_1| + |\Mc_2|-1$ have equal probability.
With the assumption of finite precision one can get by sampling the exact distribution of such traces with high confidence.
Note that the same algorithm can not be applied without assuming finite precision, since arbitrarily small difference in chains cannot be detected. 
 
%

%

\myspace\myspace

\section{Conclusions and Future Work}
\label{sec:concl}
\myspace

We have introduced a linear-distance framework for Markov chains and considered estimating the distances in the black-box setting from simulation runs.
We investigated several distances, delimiting the (in)estimability  boarder for distances given topologically, logically, and by automata.
As the next step, it is desirable to look for practical algorithms that would converge fast on practical benchmarks. 
Another direction is to characterize the largest language for which the distance can be estimated, and, dually, the smallest language that cannot be estimated.

\myspace

\bibliographystyle{plainurl}
\bibliography{ref}

\newpage
\appendix
\section{Proofs from Section~\ref{sec:neg}}

We show that $q$ as defined in the proof of Theorem~\ref{thm:tv} is finite.

\begin{claim}
	$q\in \Nset$.
\end{claim}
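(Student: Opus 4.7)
The plan is to show that the set $\{c \in \Nset : \pr_1(Y \leq c) \geq 0.5 + \alpha\}$ is a nonempty subset of $\Nset$, and then invoke the well-ordering principle to conclude that its minimum exists and is a natural number.

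First, I would recall the two key facts established before the claim: (i) $\mathsf{TM}$ almost-surely terminates on $\Mc(0)$, which gives $\pr_1(Y \in \Nset) = 1$, equivalently $\pr_1(Y < \infty) = 1$; and (ii) we are working under the hypothesis $\alpha < \tfrac{1}{2}$, so in particular $0.5 + \alpha < 1$.

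Next, I would use continuity of probability from below applied to the increasing chain of events $\{Y \leq 0\} \subseteq \{Y \leq 1\} \subseteq \cdots$ whose union is $\{Y < \infty\} = \{Y \in \Nset\}$. This yields
\[
\lim_{c \to \infty} \pr_1(Y \leq c) = \pr_1(Y < \infty) = 1 > 0.5 + \alpha.
\]
Consequently, there exists some $c_0 \in \Nset$ with $\pr_1(Y \leq c_0) \geq 0.5 + \alpha$, so the set in the definition of $q$ is nonempty. Since it is a nonempty subset of $\Nset$, the well-ordering of the naturals guarantees that its minimum exists and lies in $\Nset$, hence $q \in \Nset$.

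This argument is essentially routine; there is no real obstacle. The only point to be careful about is invoking $\alpha < \tfrac{1}{2}$ (which is assumed in Theorem~\ref{thm:tv}) to ensure the limiting value $1$ strictly exceeds the threshold $0.5 + \alpha$, so that a finite witness $c_0$ is guaranteed rather than only a supremum.
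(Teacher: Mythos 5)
Your argument is correct and matches the paper's proof in substance: both rely on continuity from below to get $\lim_{c\to\infty}\pr_1(Y\leq c)=\pr_1(Y\in\Nset)=1$ and on $0.5+\alpha<1$ to produce a finite witness, the only difference being that you argue directly via well-ordering while the paper phrases it as a contradiction with $q=\infty$. No gaps.
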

\begin{proof}
 Suppose for contradiction that $q=\infty$, then
\begin{equation}
\label{eq:n1}
\forall c\in \Nset ~.~\pr_1(Y \leq c)< 0.5+\alpha
\end{equation}
From the standard results in probability theory we obtain
\begin{equation}
\label{eq:n2}
\lim_{c\rightarrow \infty^-} \pr_1(Y\leq c) =  \pr_1(Y\in \Nset).
\end{equation}
From the assumption that the algorithm terminates almost surely we get that the RHS of \eqref{eq:n2} equals $1$, while the LHS
must be $\leq 0.5+\alpha<1$ by \eqref{eq:n1}, which is a contradiction.  
\end{proof}

\bigskip

\noindent
We show that $\epsilon$ as defined in the proof of Theorem~\ref{thm:tv} is positive.

\begin{claim}
	$\epsilon>0$.
\end{claim}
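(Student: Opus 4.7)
The plan is to split into the two cases given in the definition of $\epsilon$. For the degenerate case $q=0$, the value $\epsilon=0.25$ is positive by inspection, so nothing needs proving. All the work is in the generic case $q>0$, where $\epsilon = 0.5 - \alpha^{1/q} 2^{(1-q)/q}$, and the task reduces to showing $\alpha^{1/q} 2^{(1-q)/q} < 1/2$.

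The key algebraic step I would carry out first is to rewrite the offending product in a form that exposes the role of the hypothesis $\alpha < 1/2$. Using $(1-q)/q = 1/q - 1$, we get
\[ \alpha^{1/q} \cdot 2^{(1-q)/q} \;=\; \alpha^{1/q} \cdot 2^{1/q} \cdot 2^{-1} \;=\; \frac{(2\alpha)^{1/q}}{2}, \]
so that $\epsilon = \tfrac{1}{2}\bigl(1 - (2\alpha)^{1/q}\bigr)$.

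Second, I would invoke the hypothesis $\alpha < \tfrac{1}{2}$ from Theorem~\ref{thm:tv}, which gives $2\alpha < 1$. Since $q > 0$, the function $x\mapsto x^{1/q}$ is strictly increasing on $[0,\infty)$, so $(2\alpha)^{1/q} < 1^{1/q} = 1$. Plugging into the simplified expression for $\epsilon$ yields $\epsilon > 0$, as required.

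There is no real obstacle here; the verification is purely algebraic. The only subtlety worth flagging is that the argument uses the \emph{strict} inequality $\alpha < 1/2$ in an essential way: if we had only $\alpha \le 1/2$, then $(2\alpha)^{1/q}$ could equal $1$ and yield $\epsilon=0$. This is consistent with the hypothesis of Theorem~\ref{thm:tv}, which is precisely $\alpha < \tfrac{1}{2}$.
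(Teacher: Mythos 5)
Your proof is correct and follows essentially the same route as the paper: the paper bounds $\alpha^{1/q}2^{(1-q)/q}$ by monotonicity in $\alpha$, evaluating at $\alpha=\tfrac12$ to get exactly $0.5$, while you rewrite the same quantity as $\tfrac12(2\alpha)^{1/q}$ and use monotonicity of $x\mapsto x^{1/q}$ — the same one-line algebra, equally reliant on the strict inequality $\alpha<\tfrac12$. Both handle the degenerate case $q=0$ trivially, so there is nothing missing.
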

\begin{proof}
 
For $q=0$ this is trivial. Otherwise,
observe that the term
$\alpha^\frac{1}{q}2^\frac{1-q}{q}$ is monotonically increasing in
$\alpha$. Thus,
\[ \alpha^\frac{1}{q}2^\frac{1-q}{q} <
0.5^\frac{1}{q}2^\frac{1-q}{q} = 0.5, \]
which implies that $\epsilon >0$.
\end{proof}

\section{Proofs from Section~\ref{sec:pos}}

\begin{reflemma}
	{lem:det-branch}
	If $s\in S$ is $k$-branching, it is then $(k+1)$-branching. 
	Dually, if it is $k$-deterministic, it is also $(k-1)$-deterministic.
\end{reflemma}

\begin{proof}
	The lemma follows trivially from the definition: if there exist two different words $w,w'\in \Sigma^k$ such that $\pr^s(w)>0$ and $\pr^s(w')>0$, they can be always extended to different words $wa, w'a'\in \Sigma^{k+1}$ with positive probability. 
\end{proof}

\begin{reflemma}
	{lemma:sth}
	Let $s$ be a $n^2$-deterministic state in Markov chain $\Mc$ with $n$ states. Then
	there are words $u,z$, such that $|z|+|u|\leq n$, $|u|\geq 1$, and $\pr^s(z u^\omega) = 1$.
\end{reflemma}

\begin{proof}
	Consider any run $\run=s_1 s_2 \cdots$, where $s_1 = s$.  Let $t$ be
	the first state on $\run$ that occurs twice, i.e.\ $j$ is the
	smallest index such that
	\[ \exists i.~i<j~\land~s_i=s_j=t~.\]
	Here $s_i, s_j$ are the first and second occurrence of $t$ on
	$\run$, respectively.  It holds that $j\leq n+1$, because
	otherwise  some other state would occur twice earlier than $t$.
	
	Let $u,z$ be the following words $z = \Lab(s_1\cdots s_{i-1})$ and $u = \Lab(s_i\ldots s_{j-1}).$
	Clearly $|z|+|u|\leq n$ and $|u|\geq 1$.  The word $u$ can be repeated any number
	of times from state $s_i=s_j=t$.  By Lemma \ref{lemma:det}, state
	$s$ is $\infty$-deterministic and thus $t$ as well. Hence $u^\omega$ has the
	probability one from the state $t$.  As a consequence
	$ \pr_s(z u^\omega) = 1$.
\end{proof}

\begin{reflemma}
	{lemma:dim}
	Consider a Markov chain $\Mc$ with $n$ states. For every $k\in \Nset$ and word $w$, if $|w|> k$ then
	\[ \sum_{\pi
		\in B^{k}(w)} \pr(\pi
	) \leq (1-\pmin^{n^2})^{\lfloor \frac{k}{n^2}\rfloor}~. \]
\end{reflemma}

\begin{proof}
	Through the proof let $c=n^2$ and $w\uparrow n$ denote the suffix of $w$ of length $n$. We show
	\begin{align*}
	&\sum_{s_1\cdots s_{|w|}\in B^k(w)} \pr(s_1\cdots s_{|w|}) \\
	&\leq\sum_{s_1\cdots s_{k}\in B^k(w)} \pr(s_1\cdots s_{k})
	& \substack{s_1\cdots s_k=s_1\cdots s_{|w|}\downarrow k } \\
	&= \sum_{\substack{s_1\cdots s_{k-c} \in\\ B^k(w\downarrow k-c)}} \pr(s_1\cdots s_{k-c})
	\left( \sum_{\substack{s_{k-c}\cdots s_{k}\in\\B^c(w\uparrow c+1)}} \pr^{s_{k-c}}(s_{k-c}\cdots s_{k})\right) 
	& \substack{\text{split $w$ into $w\downarrow k-c$}\\\text{ and $w\uparrow (c+1)$}}\\
	&\leq \sum_{\substack{s_1\cdots s_{k-c} \in\\ B^k(w\downarrow (k-c))}} \pr(s_1\cdots s_{k-c})
	\cdot\pr^{s_{k-c}}(w\uparrow c+1)
	& \substack{ B^c(x)\subseteq L^{-1}(x) }\\
	&\leq \sum_{\substack{s_1\cdots s_{k-c}\in \\ B^k(w\downarrow k-c)}} \pr(s_1\cdots s_{k-c}) (1-\pmin^{c}) & \substack{\text{by Lemma~\ref{lemma:prob}, since }\\s_{k-c}\text{ is $c$-branching,}\\ \text{and thus $(c+1)$-branching}}\\
	&\leq \sum_{\substack{s_1\cdots s_{k-2c}\in \\ B^k(w\downarrow k-2c)}} \pr(s_1\cdots s_{k-2c}) (1-\pmin^{c})^2 & \substack{\text{by Lemma~\ref{lemma:prob}, since }\\s_{|w|-2c}\text{ is $c$-branching,}\\ \text{and thus $(c+1)$-branching}}\\
	&\leq \sum_{\substack{s_1\cdots s_{k-\lfloor\frac{k}{c}\rfloor c}\in \\ B^k(w\downarrow k-\lfloor\frac{k}{c}\rfloor c)}} \pr(s_1\cdots s_k) (1-\pmin^{c})^{\lfloor\frac{k}{c}\rfloor} & \substack{\text{by repeatedly}\\\text{applying Lemma~\ref{lemma:prob}}}\\
	&\leq (1-\pmin^{c})^{\lfloor \frac{k}{c}\rfloor}~.
	\end{align*}
\end{proof}

\begin{reflemma}
	{lemma:ep2}
	Consider a Markov chain $\Mc$ with $n$ states. Let $w$ be a $k$-ultimately periodic word in $\Mc$, and $x$ be a prefix of $w$ such that $|x|> k+n$. Then 
	\[ \pr(x) - \pr(w) \leq  (1-\pmin^{n^2})^{\lfloor \frac{k-n}{n^2}\rfloor}~.\]
\end{reflemma}

\begin{proof}
	Let $c=n^2$.
	We split $\pr(x)$ and $\pr(w)$ in the following way:
	\begin{align}
	\pr(x) &= \overbrace{\sum_{s_1\cdots s_{|x|}\in B^{k-n}(x)} \pr(s_1\cdots s_{|x|})}^{SB_1}  +\overbrace{\sum_{s_1\cdots s_{|x|}\in D^{k-n}(x)} \pr(s_1\cdots s_{|x|})}^{SD_1}  \label{lep2-1}\\
	\pr(w) &= \underbrace{\sum_{s_1\cdots s_{|w|}\in B^{k-n}(w)} \pr(s_1\cdots s_{|w|})}_{SB_2}  +\underbrace{\sum_{s_1\cdots s_{|w|}\in D^{k-n}(w)} \pr(s_1\cdots s_{|w|})}_{SD_2}  \label{lep2-2}~.
	\end{align}
	By Lemma~\ref{lemma:dim} we get
	\begin{align}
	&SB_1\leq (1-\pmin^{c})^{\lfloor \frac{k-n}{c}\rfloor} \label{lep2-3}\\
	&SB_2 \leq (1-\pmin^{c})^{\lfloor \frac{k-n}{c}\rfloor}~. \label{lep2-4} 
	\end{align}
	We now prove that the deterministic paths for $w$ and $x$ have the same probability
	\begin{equation}
	\label{lep2-5}
	SD_1 = SD_2.
	\end{equation}
	Consider any path $\ppath=s_1\cdots s_{|x|}\in D^{k-n}(x)$.  By
	definition path $\ppath$ enters a $n^2$-deterministic state before
	step $k-n$. W.l.o.g let $s_{k-n}$ be that $n^2$-deterministic
	state. By Lemma~\ref{lemma:sth} there are words $z,u$ such that
	\[ \pr_{s_{k-n}}(zu^\omega) = 1, \]
	and $|u|\geq 1$, $|z|+|u|\leq n$. 
	Thus the word labelling $\ppath$ has the form
	\[x = \Lab(\ppath) \in \pref{y z u^\omega},\]
	where $y = \Lab(s_1 \cdots s_{k-n})$.
	Both $x$ and $w$ are $k$-ultimately periodic and have length greater than $k+n$, so they both must be of the form
	\[w,x = \Lab(\ppath) \in \pref{y z u^\omega}~.\]

        Consider any path $\ppath \in D^{k-n}(x)$, and let $E(\ppath)$ denote all extensions of $\ppath$ to the paths of length $|w|$:
	\[ E(\ppath) = \{ p_1\cdots p_{|w|}~|~\forall i\leq |x|.~ p_i=s_i \land \forall i<|w|.~\Pm(p_i,p_{i+1})> 0\}~.\]
	All paths in $E(\ppath)$ are labelled by the same word, namely $w$, and enter a $n^2$-deterministic state before step $k-n$, therefore
$E(\ppath)\subseteq D^{k-n}(w)$, which  implies that $\pr(SB_1) \leq \pr(SB_2)$.
Now, consider any path $\ppath \in D^{k-n}(w)$. The prefix of $\ppath \downarrow |x|$ is labelled by the word $x$, and enters a $n^2$-deterministic state before step $k-n\geq |x|$, so 
$\ppath\downarrow |x| \in D^{k-n}(w)$; this implies the other inequality  that $\pr(SB_2) \leq \pr(SB_1)$.

	
	Finally, we write 
	\begin{align*}
	&\pr(x) - \pr(w)=  SB_1 + SD_1 - SB_2 - SD_2 &\text{by \eqref{lep2-1} and \eqref{lep2-2}}\\
	&= SB_1 - SB_2 & \text{by \eqref{lep2-5}}\\
	&\leq (1-\pmin^{c})^{\lfloor \frac{k-n}{c}\rfloor} &\text{by \eqref{lep2-3} and \eqref{lep2-4}}~.
	\end{align*}
\end{proof}


\end{document}